\definecolor{bleu_sombre}{rgb}{0,0,0.6}  
\definecolor{rouge_sombre}{rgb}{0.8,0,0}
\definecolor{vert_sombre}{rgb}{0,0.6,0}
\theoremstyle{plain}
\newtheorem{theorem}{{Theorem}}[section] 
\newtheorem*{theorem*}{{Theorem}}
\newtheorem{proposition}[theorem]{Proposition}
\newtheorem*{proposition*}{Proposition}
\newtheorem*{corollary*}{Corollary}
\newtheorem{lemma}[theorem]{Lemma}
\newtheorem*{lemma*}{Lemma}
\theoremstyle{definition}
\newtheorem*{definition*}{Definition}
\theoremstyle{remark}
\newtheorem{remark}[theorem]{Remark}
\newcommand {\limt}[2]{\xrightarrow[#1 \to #2]{}}
\newcommand{\abs}[1]{\left\vert #1\right\vert}      
\newcommand{\nr}[1]{\left\Vert #1\right\Vert}         
\newcommand{\innp}[2]{\left< #1 , #2 \right>}          
\newcommand{\pppg}[1] {\left< #1 \right>}
\newcommand{\set}[1]{\left\{ #1 \right\}}		
\renewcommand{\leq}{\leqslant}	\renewcommand{\geq}{\geqslant}
\newcommand{\inv}{^{-1}}
\newcommand{\st}{\,:\,}
\renewcommand{\Re}{\mathsf{Re}}        
\newcommand{\trsp}{^{\intercal}}
\newcommand{\Dom}{\mathsf{Dom}}
\newcommand{\Sp}{\sigma}
\newcommand{\R}{\mathbb{R}}		\newcommand{\C}{\mathbb{C}}
\newcommand{\N}{\mathbb{N}}
\renewcommand{\a}{\alpha}\newcommand{\g}{\gamma}\newcommand{\e}{\varepsilon} \renewcommand{\th}{\theta}\newcommand{\Th}{\Theta}\renewcommand{\k}{\kappa}\renewcommand{\l}{\lambda}\newcommand{\x}{\xi}\newcommand{\s}{\sigma}\newcommand{\vf}{\phi}\newcommand{\p}{\psi}\renewcommand{\o}{\omega}\renewcommand{\O}{\Omega}
\newcommand{\Sc}{{\mathcal S}}
\begin{document}

\renewcommand{\labelitemi}{---}

\newcommand{\ess}{{\mathsf{ess}}}

\newcommand{\LL}{\mathscr L}
\newcommand{\hLL}{\mathscr L_\x}
\newcommand{\LLh}{\mathscr L_h}
\newcommand{\hLLh}{\hat {\mathscr L}_h}

\newcommand{\etah}{\eta_h}

\title[Absence of embedded eigenvalues]{Absence of embedded eigenvalues for translationally invariant magnetic Laplacians}

\author{N. Raymond}
\address[N. Raymond]{Universit\'e d'Angers, CNRS, LAREMA - UMR 6093, 49045 Angers Cedex 01, France}
\email{nicolas.raymond@univ-angers.fr}

\author{J. Royer}
\address[J. Royer]{Institut de math\'ematiques de Toulouse - UMR 5219, Universit\'e de Toulouse, CNRS, 31062 Toulouse cedex 9, France}
\email{julien.royer@math.univ-toulouse.fr}

\begin{abstract}
 Translationnally invariant bidimensional magnetic Laplacians are considered. Using an improved version of the harmonic approximation, we establish the absence of point spectrum under various assumptions on the behavior of the magnetic field.
\end{abstract}

\maketitle

\section{Context and results}

\subsection{Translationally invariant magnetic Laplacians}
This paper is devoted to the description of the point spectrum of translationally invariant magnetic Laplacians in two dimensions. Here the magnetic field $B$ is assumed to be a smooth enough function that only depends on its first variable. More precisely, we assume that
\[\forall (x,y)\in \R^2\,,\quad B(x,y)=b(x),\]
where $b\in\mathscr{C}^1(\R,\R)$. 
Associated with $B$, we may consider a vector potential $\mathbf{A}=(A_{1}, A_{2})$ where
\begin{equation} \label{def-a}
A_{1}(x,y)=0\,,\qquad A_{2}(x,y)=a(x):= a_0 + \int_{0}^x b(u)\mathrm{d} u\,,
\end{equation}
for some arbitrary $a_0$. When the limits exist in $\R \cup \{\pm \infty\}$, we set 
\begin{equation} \label{def-phi}
\phi_\pm = \lim_{x \to \pm \infty} a(x).
\end{equation}

The magnetic Laplacian under consideration in this paper is the self-adjoint differential operator
\begin{equation} \label{def-LL}
\LL =(-i\nabla-\mathbf{A})^2=D_x^2 + \big( D_y - a(x) \big)^2\,,\quad D=-i\partial\,,
\end{equation}
equipped with the domain
\[\Dom(\LL) = \set{ u \in H^1_{\mathbf{A}}(\R^2) \st (-i\nabla-\mathbf{A})^2u\in L^2(\R^2)}\,,\]
where 
\[H^1_{\mathbf{A}}(\R^2)=\{u\in L^2(\R^2) \st (-i\nabla-\mathbf{A})u\in L^2(\R^2)\}\,.\]
%Throughout this paper, we assume that, for some $\phi_\pm \in [-\infty,+\infty]$,
%\[
%a(x) \limt x {\pm \infty} \phi_\pm.
%\]

\subsection{Context and motivation}
Due to the translation invariance, it is easy to see that the spectrum of $\mathscr{L}$ is essential:
\[\s(\mathscr{L})=\s_\ess(\mathscr{L})\,.\]

 The main question addressed in this paper is to find conditions under which $\mathscr{L}$ has no eigenvalue. Thus, we would like to exclude the existence of $(\lambda,\psi)\in[0,+\infty)\times\Dom(\mathscr{L})$ such that $\psi \neq 0$ and $\mathscr{L}\psi=\lambda\psi$. In order to understand how subtle this question can be, let us remark the following:
 \begin{itemize}
 	\item When $b$ is constant and non-zero, it is well-known that the spectrum is made of infinitely degenerate eigenvalues, the Landau  levels:
 	\[\s(\mathscr{L})=\{(2n-1)|b|\,,n\geq 1\}\,.\]
 	\item When $\phi_+$ or $\phi_-$ is finite, one will see in our proofs that
 	\[\s(\mathscr{L})=[0,+\infty)\,.\]
 	\end{itemize}
Thus, as noticed in the seminal paper \cite{I85}, even the nature of the essential spectrum itself strongly depends on the variations of $b$. 

In this paper, we focus our investigation on proving the \emph{absence of point spectrum}, even if, in some particular situations, our proof might also imply the absolute continuity of the spectrum. In particular, in Theorem \ref{th-semi-confined}, one will see that, if $b(x)$ behaves like $x^\alpha$ (with $\alpha\neq 0$ and $\alpha>-1$) at infinity, the Landau levels structure is lost as well as the existence of eigenvalues. Theorem \ref{th-a-bounded} is of asympotic nature: when $b\in L^1(\R,\R_+)$ and when the magnetic field is large, we show that the only possible eigenvalues are essentially of the order of the flux squared. 

Our main results deal with cases when $a$ is semi-bounded, semi-unbounded, and when $a$ is bounded. They partially extend  the results in \cite{I85} (where the assumptions imply $\displaystyle{\lim_{x\to\pm\infty} a(x)=\pm\infty}$) by considering non-necessarily bounded magnetic fields. 

More generally, this paper can be considered as an exploration of the conjecture stated in \cite[Theorem 6.6 \& Remark 1]{cycon}. Let us recall a theorem whose proof may be deduced from the investigation in \cite{I85} (and also \cite[Theorem 6.6]{cycon} where the magnetic field is allowed to vanish).
\begin{theorem}[Ywatsuka '85] \label{th-conf-semibounded}
	Assume
	\begin{enumerate}[\rm (i)]
		\item\label{eq.thm1i} either that (see \eqref{def-phi})
		\[
		\phi_- = \phi_+ = -\infty \quad \text{or} \quad  \phi_- = \phi_+ = +\infty\, ,
		\]
		\item\label{eq.thm1ii} or that $\displaystyle{\lim_{x\to\pm\infty} b(x)=b_{\pm}}$ with $b_{\pm}\in\R\setminus\{0\}$ distinct.
	\end{enumerate}
	Then $\LL$ has absolutely continuous spectrum. In particular, $\mathscr{L}$ has no eigenvalue.
\end{theorem}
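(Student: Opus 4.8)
The plan is to diagonalise $\LL$ by the partial Fourier transform $\Fc_y$ in the $y$ variable, which is legitimate because $\LL$ commutes with the translations $(x,y)\mapsto(x,y+t)$; this gives the direct integral
\[
\Fc_y\,\LL\,\Fc_y^{*}=\int_{\R}^{\oplus}\hLL\,\mathrm d\x ,\qquad \hLL=D_x^{2}+\bigl(\x-a(x)\bigr)^{2}\ \text{ on }L^{2}(\R_x),\quad\x\in\R .
\]
Under either hypothesis one has $\abs{a(x)}\to+\infty$ as $\abs{x}\to\infty$ — in case~(i) because $a$ has limit $+\infty$ (or $-\infty$) at both ends, in case~(ii) because $a(x)/x\to b_{\pm}$ as $x\to\pm\infty$ — so for every $\x$ the potential $(\x-a(x))^2$ is confining, $\hLL$ has compact resolvent, and, being one-dimensional, has simple eigenvalues $\lambda_1(\x)<\lambda_2(\x)<\cdots$. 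Since $\x\mapsto\hLL$ is an analytic family of type~(A), each band function $\lambda_n$ is real-analytic on $\R$ (with real-analytic normalised eigenfunctions). By the classical spectral theory of such fibered operators (see e.g. \cite{cycon}), $\LL$ has an eigenvalue $E$ if and only if $\lambda_n\equiv E$ on a set of positive measure for some $n$, which by analyticity forces $\lambda_n\equiv E$; and if every $\lambda_n$ is non-constant, then $\LL$ is purely absolutely continuous. Thus everything reduces to proving that no band function is constant.

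In case~(i), suppose for instance $\phi_-=\phi_+=+\infty$ (the other sign is symmetric, via $\LL\mapsto\overline{\LL}$). Then $m:=\inf_{\R}a$ is finite, and for $\x<m$ we have $(\x-a(x))^2=(a(x)-\x)^2\ge(m-\x)^2$ for all $x$, so $\hLL\ge(m-\x)^2$ and $\lambda_1(\x)\ge(m-\x)^2\to+\infty$ as $\x\to-\infty$; hence every $\lambda_n\ge\lambda_1$ is unbounded, in particular non-constant. In case~(ii), if $b_+b_-<0$ then $a(x)/x\to b_\pm$ gives $\phi_-=\phi_+\in\{\pm\infty\}$, so we are reduced to case~(i); otherwise $b_+b_->0$ and, replacing $\LL$ by $\overline{\LL}$ if necessary (which turns $b$ into $-b$ and preserves the spectrum), we may assume $b_{\pm}>0$, with $b_+\ne b_-$. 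Then $a$ is strictly increasing outside a compact set, with $a(x)\to\pm\infty$ as $x\to\pm\infty$, so for $\abs{\x}$ large the zero set $\{a(x)=\x\}$ of the potential is a single non-degenerate point $x_{\x}$ (defined by $a(x_\x)=\x$, so that $a'(x_\x)=b(x_\x)\ne0$), with $x_{\x}\to\pm\infty$ and $b(x_{\x})\to b_{\pm}$ as $\x\to\pm\infty$. A harmonic approximation of $\hLL$ at $x_{\x}$ — comparison with the shifted oscillator $D_x^2+b(x_{\x})^2(x-x_{\x})^2$, using suitably rescaled and translated Hermite functions for the upper bound and an IMS-type localization around $x_{\x}$ (beyond which the potential grows) for the lower bound — yields
\[
\lambda_n(\x)\ \limt{\x}{\pm\infty}\ (2n-1)\,b_{\pm}\,.
\]
As $b_+\ne b_-$, these limits differ, so each $\lambda_n$ is non-constant; this completes the proof in both cases.

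The one genuinely non-routine step is the asymptotics $\lambda_n(\x)\to(2n-1)b_{\pm}$ of case~(ii): the fibration, the analyticity and simplicity of the bands, and the measure-theoretic passage from ``no flat band'' to absence of point spectrum and to absolute continuity are all classical. The actual work is to locate the well $x_{\x}$ (immediate once $a$ is monotone near infinity), to verify that the cubic remainder of $a$ at $x_{\x}$ is negligible on the oscillator length scale $\abs{x-x_{\x}}\lesssim b(x_{\x})^{-1/2}$, and to turn this into two-sided eigenvalue bounds uniform enough to pass to the limit. This is precisely where the refined harmonic approximation of the present paper — needed in full strength for the unbounded magnetic fields of Theorem~\ref{th-semi-confined} — provides a clean and flexible substitute for the original computations of \cite{I85}.
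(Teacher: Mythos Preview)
Your proposal is correct and follows essentially the same route as the paper: Fourier fibration into $\hLL$, analyticity and simplicity of the band functions, the reduction via Proposition~\ref{prop-Sigma-lambda} to showing no $\lambda_n$ is constant, then for case~\eqref{eq.thm1i} the observation that $\inf\sigma(\hLL)\to+\infty$ in one direction of $\xi$, and for case~\eqref{eq.thm1ii} the harmonic-approximation asymptotics $\lambda_n(\xi)\to(2n-1)b_\pm$ (the paper records this last point in the Remark following the proof of Theorem~\ref{th-semi-confined}, via Proposition~\ref{prop.nsb}). Your explicit split of case~\eqref{eq.thm1ii} according to the sign of $b_+b_-$ is a minor organisational difference; note, as in the paper, that the harmonic approximation step tacitly uses some control on $b'$ near infinity beyond the bare hypothesis $b\to b_\pm$.
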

\subsection{Some relations with the literature}
In \cite{I85}, the author is mainly concerned by proving the absolute continuity of the spectrum. Note that this issue is closely connected to the existence of edge currents (quantified by Mourre estimates), as explained for instance in \cite{HS15}, where positive magnetic fields are considered. The reader might also want to consider
\begin{itemize}
\item the physical considerations in \cite{PeetRej}, 
\item the paper \cite{HPPR16} considering the dispersion curves associated with non-smooth magnetic fields,
\item the contribution \cite{Tusek16} generalizing Iwatsuka's result by adding a translationnaly invariant electric potential, 
\item the paper \cite{Y08} devoted to dimension three and fields having cylindrical and longitudinal symmetries,
\item or \cite{MP18} where various estimates of the band functions are established for increasing, positive, and bounded magnetic fields, and applied to the estimate of quantum currents.
\end{itemize}

\subsection{Main results}
Let us now state our main theorems. In the first result we generalize Theorem \ref{th-conf-semibounded}.\eqref{eq.thm1ii} by considering situations where $\vf_+ = +\infty$ and $\vf_- \in \R \cup \{-\infty\}$.

\begin{theorem} \label{th-semi-confined}
 Assume that $\vf_-$ exists as an element of $\R\cup\{\pm\infty\}$ and that for some $\alpha \in (-1,0) \cup (0,+\infty)$ and $c_{1}, C>0$ we have 
\begin{equation} \label{hyp-a-asymp}
b(x)\underset{x\to+\infty}{\sim} c_{1} x^{\alpha} \quad \text{and} \quad \abs{b'(x)} \leq C \pppg x^{\alpha-1}.
\end{equation}
Then $\LL$ has no eigenvalue. 
\end{theorem}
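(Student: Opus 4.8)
The plan is to exploit the translation invariance in $y$ via a direct integral (partial Fourier transform in $y$), reducing $\LL$ to the family of one-dimensional operators $\hLL = D_x^2 + (\x - a(x))^2$ on $L^2(\R_x)$, indexed by $\x \in \R$. Since $\s(\LL) = \s_\ess(\LL)$, an eigenvalue $\lambda$ of $\LL$ would have to arise from $\lambda$ being an eigenvalue of $\hLL$ for $\x$ in a set of positive measure; equivalently, writing the band functions $\lambda_n(\x)$ as the ordered eigenvalues of $\hLL$ (each $\hLL$ has compact resolvent because $(\x - a(x))^2 \to +\infty$ as $x \to +\infty$ under \eqref{hyp-a-asymp}, and we must check that the behavior as $x\to-\infty$ — where $\vf_-$ may be finite — still gives a discrete spectrum, or else argue on a half-line), $\LL$ has an eigenvalue iff some $\lambda_n$ is constant on an interval. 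So the crux is to prove that \emph{every band function $\lambda_n$ is strictly monotone}, or at least non-constant on any interval; by analyticity of $\x \mapsto \hLL$ (a holomorphic family of type (A)) each $\lambda_n$ is real-analytic away from crossings, so it suffices to rule out $\lambda_n' \equiv 0$ on an open set.

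The key tool will be the Feynman–Hellmann / Iwatsuka-type formula for the derivative of a band function: if $\p_\x$ is a normalized eigenfunction of $\hLL$ with eigenvalue $\lambda_n(\x)$, then
\[
\lambda_n'(\x) = 2 \int_\R (\x - a(x)) \, \abs{\p_\x(x)}^2 \, \mathrm{d}x.
\]
Thus $\lambda_n'(\x) = 0$ forces the weight $\x - a(x)$ to change sign against $\abs{\p_\x}^2$; the point of the hypothesis $b \sim c_1 x^\alpha > 0$ near $+\infty$ is that $a$ is eventually strictly monotone (increasing) and $a(x) \to +\infty$, so the "turning point" set $\{a(x) = \x\}$ is controlled. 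The strategy is then a \emph{reductio}: suppose $\lambda_n(\x) = \lambda$ on an interval $I$. Differentiating the eigenvalue equation $\hLL \p_\x = \lambda \p_\x$ in $\x$ and using the improved harmonic approximation (the paper's announced technical engine), one gets quantitative control, on the classically forbidden region $x \gg 1$, of $\p_\x$ and $\partial_\x \p_\x$ — Agmon-type exponential decay with $\x$-uniform constants. Feeding this back into the derivative formula and its $\x$-derivatives (the relations $\partial_\x^k \lambda_n \equiv 0$ on $I$) should force $\p_\x$ to satisfy an overdetermined system, e.g. that $(\x - a(x))\abs{\p_\x}^2$ integrates to zero together with all its $\x$-moments, which via the analyticity/unique-continuation for the ODE $\hLL \p_\x = \lambda \p_\x$ (a second-order linear ODE with analytic coefficients — solutions are analytic and a nontrivial $L^2$ solution cannot vanish on an interval) yields a contradiction.

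Concretely I would proceed in these steps: (1) set up the fiber decomposition and record that under \eqref{hyp-a-asymp} each $\hLL$ has discrete spectrum, with band functions $\lambda_n \in \mathscr{C}^\infty$ (away from crossings, real-analytic), and that $\lambda$ is an eigenvalue of $\LL$ iff $\lambda_n \equiv \lambda$ on an interval for some $n$; (2) establish the Feynman–Hellmann formula for $\lambda_n'$ and, more generally, express $\partial_\x \p_\x$ via the reduced resolvent $(\hLL - \lambda_n(\x))^{-1}$ restricted to the orthogonal complement of $\p_\x$; (3) prove $\x$-uniform Agmon estimates for $\p_\x$ and $\partial_\x\p_\x$ on $[R, +\infty)$ using that $(\x - a(x))^2 \gtrsim a(x)^2 \gtrsim x^{2\alpha+2}$ there, with $\abs{b'}$ bounded as in \eqref{hyp-a-asymp} guaranteeing the potential is slowly varying enough for the harmonic approximation to apply; (4) assume $\lambda_n$ constant on $I$, translate $\lambda_n' \equiv 0$, $\lambda_n'' \equiv 0$, $\dots$ into integral identities, and derive that $\p_\x$ restricted to a neighborhood of $+\infty$ must be "orthogonal to too much," then invoke unique continuation for the ODE to get $\p_\x \equiv 0$, a contradiction. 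The main obstacle I anticipate is step (3)–(4): making the harmonic-approximation estimates genuinely \emph{uniform in $\x$} as $\x \to \pm\infty$ (the turning point $x_\x$ with $a(x_\x) = \x$ runs off to $+\infty$, and the local harmonic model there has frequency $\sim \sqrt{b(x_\x)} \to \infty$ or controlled decay), and then converting the infinitely many constraints $\partial_\x^k\lambda_n \equiv 0$ into a clean contradiction rather than just an asymptotic one — this is presumably exactly where the "improved version of the harmonic approximation" advertised in the abstract does the work, and where I'd expect the real technical content of the paper to live.
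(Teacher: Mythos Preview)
Your reduction to the fibered operators $\hLL$, the analyticity, and the conclusion that an eigenvalue of $\LL$ forces some $\lambda_n(\x)$ to be constant on a connected component of $\Sigma_\lambda$ are all correct and match the paper. But from there the approaches diverge, and your step (4) is where the proposal has a genuine gap.

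You try to rule out $\lambda_n \equiv \lambda$ on an interval via the Feynman--Hellmann derivative $\lambda_n'(\x)=2\int(\x-a)\abs{\p_\x}^2$ and then, because sign analysis alone looks hard, you propose to exploit the infinite list of constraints $\partial_\x^k\lambda_n\equiv 0$ together with unique continuation for the ODE to force $\p_\x\equiv 0$. This last step is not an argument: the relations $\partial_\x^k\lambda_n=0$ translate into identities involving $\p_\x$ and its $\x$-derivatives through the reduced resolvent, not into orthogonality of $\p_\x$ against a total set in $L^2$, and there is no mechanism by which unique continuation in $x$ for a fixed $\x$ converts those identities into $\p_\x=0$. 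You yourself flag this as ``where I'd expect the real technical content of the paper to live,'' and indeed the paper does not go this way at all.

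What the paper does instead is far more direct: it computes the \emph{asymptotics} of $\lambda_n(\x)$ as $\x\to+\infty$. For large $\x$ the equation $a(x_\x)=\x$ has a unique solution with $x_\x\sim(c_0^{-1}\x)^{1/(1+\alpha)}$; after translating by $x_\x$ and rescaling, $\hLL$ becomes $\x^2$ times a semiclassical Schr\"odinger operator $h_\x^2 D_s^2+V_\x(s)$ with $h_\x=(\x x_\x)^{-1}\to 0$ and a potential having a single nondegenerate minimum at $0$ with curvature governed by $b(x_\x)$. The harmonic approximation (this is exactly the ``improved'' version announced, uniform in the parameter $\x$) then gives
\[
\lambda_n(\x)\underset{\x\to+\infty}{\sim}(2n-1)\,c_1\,c_0^{-\alpha/(1+\alpha)}\,\x^{\alpha/(1+\alpha)}.
\]
Since $\alpha\neq 0$, this is nonconstant, and the contradiction with $\lambda_n\equiv\lambda$ is immediate. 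For the other end of $\Sigma_\lambda$: when $\phi_-=-\infty$ one has $\Sigma_\lambda=\R$ and the same asymptotic finishes; when $\phi_-\in\R$, the component $(-\infty,\phi_--\sqrt\lambda)$ is handled simply because $a$ is bounded below, so $\inf\s(\hLL)\to+\infty$ as $\x\to-\infty$.

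Two further remarks on your setup: first, $\hLL$ does \emph{not} have compact resolvent when $\phi_-\in\R$ (its essential spectrum is $[(\x-\phi_-)^2,+\infty)$), so the reduction really must go through the connected components of $\Sigma_\lambda$ as in the paper's Proposition~\ref{prop-Sigma-lambda}, not through a blanket ``discrete spectrum'' claim. Second, the harmonic approximation here is used to get eigenvalue asymptotics, not Agmon decay of $\p_\x$ and $\partial_\x\p_\x$; Agmon estimates appear only as an ingredient inside the proof of the approximation theorem itself.
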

\begin{remark}
\begin{itemize}
\item By the symmetry $x\mapsto -x$, we can easily adapt this theorem to consider behaviors in $-\infty$. We have a similar result if $-b$ satisfies \eqref{hyp-a-asymp}.
\item Theorem \ref{th-semi-confined} can be applied, for instance, to $b_\pm(x)=\langle x\rangle^{\pm\frac 12}$. In particular, the same proof will establish the absence of eigenvalues for some magnetic fields tending to $+\infty$ or to $0$ at infinity.
\item We will see in Theorem \ref{th-a-bounded} that, when $b$ tends to $0$ too rapidly, the absence of eigenvalues is more subtle to establish.
\end{itemize}
\end{remark}

Our second theorem gives some results in situations where $\vf_+$ and $\vf_-$ are finite but with $\phi = \phi_+ - \phi_- \gg 1$ (the case $\phi \ll -1$ would be similar). By a change of gauge (take $a_0=\int_{-\infty}^{0} b(u)\mathrm{d}u$ in \eqref{def-a}), we can assume that $\phi_-=0$ (and hence $\vf_+ > 0$). 

The problem can then be rewritten in a semiclassical framework. If we set $h = \vf_+^{-1}$, $b_1(x) = hb(x)$ and $a_1(x) = h a(x)$, then we have $\LL = \vf_+^2 \LL_h$ where 
\begin{equation} \label{def-LLh}
\LL_h =h^2D_x^2 + \big( hD_y - a_1(x) \big)^2\,.
\end{equation}
Thus our purpose is now to prove the absence of eigenvalues of the operator $\LL_h$ with
\[
a_1(x) = \int_{-\infty}^x b_1(s) \, \mathrm d s, \quad \int_{-\infty}^{+\infty} b_1(s) \, \mathrm d s = 1.
\]
\begin{theorem} \label{th-a-bounded}
\begin{enumerate}[\rm (i)]
\item \label{th-a-bounded-item-i} For all $h>0$, the operator $\LL_h$ has no point spectrum in $\big[\frac{1}{4},+\infty\big)$.
\item \label{th-a-bounded-item-ii} Assume that $b_1$ is of class $\mathscr{C}^1(\R)$ and takes positive values. Assume also that 
\begin{itemize}
\item for some $N \geq 0$ we have 
\[
b'(x) \underset{\abs x \to +\infty}= \mathscr O \big(\abs x^N\big), 
\]
\item $a_1 \in L^1(\R_-)$ and $(a_1 - 1) \in L^1(\R_+)$.
\end{itemize}
Let $(\etah)_{h>0}$ be such that $\etah = o(\abs{\ln(h)}^{-6})$ as $h \to 0$. Then, there exists $h_0 > 0$ such that for $ h\in (0, h_0)$ the operator $\LL_h$ has no eigenvalue smaller than $\etah$.
\end{enumerate}
\end{theorem}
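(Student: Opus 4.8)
The plan is to exploit the translation invariance in the $y$-variable via a Fourier transform and reduce the problem to a family of one-dimensional operators, and then to use an improved harmonic approximation to locate the bottom of their spectra as functions of the quasimomentum.

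\medskip

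\textbf{Reduction to a fibered family.} Writing $\LL_h = \int_\R^\oplus \LLh(\x)\, \mathrm d\x$ where
\[
\LLh(\x) = h^2 D_x^2 + \big(\x - a_1(x)\big)^2 \quad \text{on } L^2(\R_x),
\]
an eigenvalue of $\LL_h$ below $\etah$ would force the band function $\mu_h(\x) := \inf \s(\LLh(\x))$ to be constant equal to some value $\l \le \etah$ on a set of positive measure, hence (since each $\mu_h$ is real-analytic in $\x$ away from crossings, by standard Kato perturbation theory applied to the analytic family $\LLh(\x)$) on an interval. So it suffices to show that no $\mu_h(\cdot)$ can be flat at a level below $\etah$ when $h$ is small; equivalently, one produces a lower bound on $|\mu_h'(\x)|$, or more robustly shows $\mu_h$ is strictly monotone, on the relevant range of $\x$. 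For item \eqref{th-a-bounded-item-i} the argument is softer: since $a_1$ takes all values in $(0,1)$ (indeed $a_1(-\infty)=0$, $a_1(+\infty)=1$, and $a_1$ is continuous), for $\x \notin [0,1]$ the potential $(\x - a_1(x))^2$ is bounded below by $\operatorname{dist}(\x,[0,1])^2$, while for $\x \in [0,1]$ one uses that $a_1$ is monotone (as $b_1 > 0$) to compare $\LLh(\x)$ with a de Gennes-type operator, whose ground state energy exceeds $1/4$; a direct quadratic-form computation, or the known fact that $\inf_{\x} \mu(\x) > 1/4$ for the model half-line problem, gives the uniform bound $\mu_h(\x) \ge \frac14$ for all $\x$ and all $h$. (This part does not require smallness of $h$ and is essentially the same observation as in the $\phi_\pm$ finite discussion announced after Theorem~\ref{th-conf-semibounded}, since $\mu_h$ has infimum $0$ only asymptotically.)

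\medskip

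\textbf{Harmonic approximation for small $h$.} For item \eqref{th-a-bounded-item-ii}, the key is that an eigenvalue below $\etah \ll |\ln h|^{-6}$ forces the corresponding $\x$ to be such that the effective potential $V_\x(x) = (\x - a_1(x))^2$ has a very low-lying well, which happens only when $\x$ is exponentially close (in a sense controlled by $|\ln h|$) to $0$ or to $1$, i.e. near $x \to -\infty$ or $x \to +\infty$. Near such a well, the bottom of $V_\x$ is roughly $(\x - a_1(x_\x))^2$ at a point $x_\x \to \pm\infty$, and the curvature of the well is governed by $b_1(x_\x)^2$, which is small; the polynomial control $b'(x) = \Oc(|x|^N)$ and the integrability $a_1 \in L^1(\R_-)$, $(a_1-1)\in L^1(\R_+)$ pin down how $x_\x$, the depth and the width of the well depend on $\x$, and one checks that $\mu_h(\x) \approx (\x - a_1(x_\x))^2 + h\, b_1(x_\x) \cdot(\text{const})$ up to errors that are $o$ of each displayed term once $h$ is small enough — this is the "improved harmonic approximation" advertised in the abstract, and the $|\ln h|^{-6}$ threshold is exactly what makes the remainder estimates close. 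Differentiating this expansion in $\x$ (using Feynman–Hellmann, $\mu_h'(\x) = \langle \partial_\x V_\x \rangle_{\text{ground state}} = 2\langle \x - a_1(x)\rangle$), one finds $\mu_h'(\x)$ has a definite sign on each of the two relevant ranges of $\x$, so $\mu_h$ cannot be locally constant there; combined with Step 1 this excludes eigenvalues below $\etah$ for $h < h_0$.

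\medskip

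\textbf{Main obstacle.} The delicate point is the quantitative harmonic approximation with a remainder uniform in $\x$: the well is non-quadratic, its bottom drifts to infinity as $\x \to 0^+$ or $\x \to 1^-$, and its depth and the semiclassical parameter $h$ must be balanced against the $|\ln h|^{-6}$ scale. Concretely, one must (a) localize the ground state of $\LLh(\x)$ to an $x$-neighborhood of the well of size controlled by $h$ and $b_1(x_\x)$, via Agmon-type exponential decay estimates, (b) rescale $x$ around $x_\x$ and expand $V_\x$ to second order, controlling the cubic and higher terms using $|b'| = \Oc(|x|^N)$ and the $L^1$-hypotheses on $a_1$, and (c) show the resulting error is negligible compared with both $(\x-a_1(x_\x))^2$ and $h\,b_1(x_\x)$ — this last comparison is where the precise power $6$ of $|\ln h|$ is consumed. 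Once these estimates are in place, the monotonicity of $\mu_h$ and hence the absence of eigenvalues follow straightforwardly.
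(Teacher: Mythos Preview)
Your proposal has genuine errors in both parts.

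\medskip
\textbf{Part (i).} Your claim that $\mu_h(\x) \geq \tfrac14$ for all $\x$ is false. For any $\x \in (0,1)$ the potential $V_\x(x) = (\x - a_1(x))^2$ vanishes at the point $x_\x$ with $a_1(x_\x)=\x$, so the ground state of $\LLh(\x)$ is of order $h b_1(x_\x)$, not bounded below by $\tfrac14$. (Your own parenthetical that ``$\mu_h$ has infimum $0$ only asymptotically'' already hints at the inconsistency.) The de Gennes comparison is irrelevant here: that model concerns a half-line with a boundary, whereas $\LLh(\x)$ lives on the full line with an interior well. You also invoke $b_1>0$ to get monotonicity of $a_1$, but positivity is \emph{not} assumed in part~(i).

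The paper's argument is different and short: for $\l \geq \tfrac14$ the set $\Sigma_\l = \{\x : \l \notin \s_\ess(\LLh(\x))\}$ equals $\R \setminus [-\sqrt\l, 1+\sqrt\l]$, which has only two \emph{unbounded} connected components. On each, $\inf \s(\LLh(\x)) \to +\infty$ as $\abs\x \to \infty$ (since $a_1$ is bounded), so no dispersion curve can be constant there. The point is not a lower bound on $\mu_h$, but the geometry of $\Sigma_\l$.

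\medskip
\textbf{Part (ii).} You assert that an eigenvalue $\l \leq \etah$ forces $\x$ to be close to $0$ or $1$, with $x_\x \to \pm\infty$ and $b_1(x_\x)$ small. This is backwards. For every $\x$ in the bounded component $(\sqrt\l, 1-\sqrt\l)$ of $\Sigma_\l$ the well minimum of $V_\x$ is exactly $0$, and the ground state is $\sim h b_1(x_\x)$, which is of order $h \ll \etah$ uniformly on any compact subinterval of $(0,1)$. So the relevant fibers are not concentrated near the endpoints; they fill a whole interval. Your formula $\mu_h(\x) \approx (\x - a_1(x_\x))^2 + \text{const}\cdot h b_1(x_\x)$ also reveals the confusion, since $(\x - a_1(x_\x))^2 = 0$ by definition of $x_\x$.

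The paper instead fixes a compact interval $J \subset \R$ on which $b_1$ is non-constant, sets $\Th = a_1(J) \subset (0,1)$, and applies a harmonic approximation uniformly over $\th \in \Th$ to get $\l_n(h,\th) = (2n-1)h b_1(x_\th) + o((2n-1)h)$ for all $n \leq N(\etah,h,\th)$. Picking $\th_1,\th_2 \in \Th$ with $b_1(x_{\th_1}) \neq b_1(x_{\th_2})$ immediately rules out $\l_n(h,\cdot)$ being constant on the bounded component. No monotonicity or Feynman--Hellmann argument is needed, and no analysis near $\x = 0$ or $\x = 1$ enters. The $\abs{\ln h}^{-6}$ threshold comes from controlling the cubic remainder in the harmonic approximation via Agmon decay on a fixed compact $\Th$, not from tracking a well drifting to infinity.

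Finally, note that you only discuss the first band $\mu_h = \l_1$, but a putative eigenvalue of $\LL_h$ could sit on any $\l_n$; the paper's estimate covers all $n$ up to $N(\etah,h,\th)$.
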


\begin{remark}
For example, we can apply Theorem \ref{th-a-bounded} to $b_1(x)=\frac{1}{\sqrt{\pi}} e^{-x^2}$. An interesting question is left open: for $h$ small enough, can we exclude the presence of eigenvalues in the interval $\left(\etah,\frac{1}{4}\right)$? One will see in the proof that this function $\etah$ is related to the harmonic approximation. To replace, for instance, $\etah$ by $\frac 14-\varepsilon$ would not only suppose to find a convenient effective Hamiltonian in the harmonic approximation (what is possible via a Birkhoff normal form in dimension one, under analyticity assumptions), but also to be able to deduce from it a non-trivial behavior of each dispersion curve. Even if such a description were possible, it would still not exclude the existence of embedded eigenvalues near $\frac{1}{4}$ in the limit $h\to 0$.
\end{remark}

\subsection{Organization of the proofs}

In Section \ref{sec.fiber}, we recall basic facts about the Fourier fibration of translationnaly invariant magnetic Laplacians. In particular, Proposition \ref{prop-Sigma-lambda} provides a criterion to exclude the existence of eigenvalues as soon as no dispersion curve is constant. Even though this proposition seems to be well-known, the presence of essential spectrum for the fibered operator requires to give a careful proof. This will immediately imply Theorem \ref{th-semi-confined}. Section \ref{sec.harmonic} is devoted to some facts about a parameter dependent version of the harmonic approximation which will be crucial in the proof of Theorem \ref{th-a-bounded} \eqref{th-a-bounded-item-ii} and which will appear when analysing the large frequency limit of the dispersion curves. This approximation will allow us to use somehow the existence of a non-constant \enquote{center-guide dynamics} to prove the non-constant character of some dispersion curves (see Remark \ref{rem.center}).
\newpage
\section{Reminders on fibered magnetic Hamiltonians}\label{sec.fiber}

Since $\mathscr{L}$ commutes with the translation in $y$, the Fourier transform in $y$ will play a fundamental role in our analysis. For $u \in L^2(\R^2)$ and for almost all $x \in \R$ we denote by $u_\x$ the Fourier transform of $u(x,\cdot)$. For $u \in \Sc(\R^2)$ it is given by 
\[
u_\x(x,\x) = \frac{1}{\sqrt{2\pi}}\int_{\R}e^{-iy\xi} u(x,y)\mathrm{d}y\,.
\]
This induces the following direct integral representation (see, for instance, \cite[Section XIII.16]{rs4} about such direct integrals)
\begin{equation} \label{eq-direct-integral}
\mathscr{L}=\int_{\R}^\oplus \mathscr{L}_{\xi} \, \mathrm{d}\xi\,,
\end{equation}
where, for all $\xi\in\R$, 
\[\mathscr{L}_{\xi}=D^2_{x}+(\xi-a(x))^2\,.\]
For all $\xi \in \R$ this defines an operator on $L^2(\R)$ with domain
\[\begin{split}
\mathrm{Dom}(\LL_{\xi})&=\set{u\in H^1(\R) \st a(x)u\in L^2(\R)\mbox{ and }(D_{x}^2+(\xi-a(x))^2)u\in L^2(\R)}\\
&=\set{u\in H^2(\R) \st (\xi-a(x))^2 u\in L^2(\R)}\,.
\end{split}
\]

In the following proposition we gather some spectral properties of $\LL_\xi$ that will be useful to the spectral analysis of $\LL$. Let us emphasize here that, in \cite[Assumption (B)]{I85}, the assumption on $b$ implies that $\s_\ess(\mathscr{L}_\xi)=\emptyset$. This will not always be the case in this paper (see Figure \ref{fig.0} where the bottom of the essential spectrum is represented as a function of $\xi$).

\begin{proposition} \label{prop-LLxi}
The operator $\LL_\x$ is self-adjoint and non-negative for all $\x \in \R$. The family $(\LL_\x)_{\x \in \R}$ is analytic of type (A). Let $\xi \in \R$.
\begin{enumerate}[\rm (i)]
\item We have 
\[
\s(\LL_\xi) \subset \Big[ \inf_{x\in\R} \big(\x-a(x)\big)^2  , +\infty \Big).
\]
\item We have 
\[
\s_\ess(\LL_\xi) = \big[ \min \big( (\x-\phi_-)^2 , (\xi - \phi_+)^2 \big) , +\infty \big).
\]
In particular, when $\abs{\phi_-} = \abs{\phi_+} = +\infty$, $\s_\ess(\LL_\xi) = \emptyset$.
\item\label{item-noemb} If $\phi_\pm \in \R$ we assume that $(\xi-a(x))^2 - (\xi-\phi_\pm)^2 \in L^1(\R_\pm)$. Then the operator $\LL_\xi$ has no embedded eigenvalue in $\s_\ess(\LL_\xi)$.
\item The eigenvalues of $\LL_\xi$ are simple and depend analytically on $\xi$.
\end{enumerate}
\end{proposition}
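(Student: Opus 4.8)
The plan is to treat the four items in turn, since they are of rather different nature. For the self-adjointness and non-negativity, I would write $\LL_\x = D_x^2 + V_\x$ with $V_\x(x) = (\x - a(x))^2 \geq 0$ a continuous potential bounded below, so essential self-adjointness on $\Cc_c^\infty(\R)$ follows from standard Schrödinger operator theory (Kato–Rellich / Faris–Lavine, or directly since $V_\x$ is locally bounded and bounded below), and the domain is the one described. Analyticity of type (A): the domain of $\LL_\x$ is in fact independent of $\x$ once we fix the sign behavior of $a$ at infinity — more precisely, $\x \mapsto \LL_\x$ is a polynomial (degree $2$) in $\x$ with values in operators on the fixed domain $\Dom(\LL_0)$, since $V_\x = V_0 + 2\x(\text{something}) + \x^2$ and the perturbations $a(x)u$, $u$ are $\LL_0$-bounded; hence the family is analytic of type (A) with common domain. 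For item (i), I would simply note $\LL_\x \geq V_\x \geq \inf_x (\x - a(x))^2$ as a quadratic form inequality.

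For item (ii), the essential spectrum computation, the idea is a decomposition/localization argument (Persson's formula or a direct IMS partition of unity). The potential $V_\x(x) = (\x - a(x))^2$ converges as $x \to \pm\infty$ to $(\x - \phi_\pm)^2$ when $\phi_\pm$ is finite, and to $+\infty$ when $\phi_\pm$ is infinite. Cutting $\R$ into $(-\infty, -R]$, $[-R,R]$, $[R,+\infty)$, the middle piece contributes only discrete spectrum (compact resolvent on a bounded interval), while on the two half-lines the operator is a compact perturbation, in the resolvent sense, of the constant-coefficient operator $D_x^2 + (\x-\phi_\pm)^2$ on the half-line (whose spectrum is $[(\x-\phi_\pm)^2, +\infty)$), respectively of an operator with potential tending to $+\infty$ (empty essential spectrum). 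Taking $R \to \infty$ gives $\s_\ess(\LL_\x) = [\min((\x-\phi_-)^2,(\x-\phi_+)^2),+\infty)$. The case $\phi_\pm = \pm\infty$ gives the empty set as a special case.

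Item (iii) is the delicate one and I expect it to be the main obstacle. Here $\phi_\pm \in \R$ (at least on one side), and the claim is no eigenvalue embedded in $\s_\ess$. The natural approach is an ODE/Jost-function argument: an eigenfunction $u$ at energy $E > (\x-\phi_\pm)^2$ would, near $+\infty$, solve $-u'' + (\x-a(x))^2 u = E u$, a perturbation of $-u'' + (\x-\phi_+)^2 u = Eu$ whose solutions oscillate like $e^{\pm i k x}$ with $k = \sqrt{E - (\x-\phi_+)^2} > 0$; the $L^1$ assumption on $W_\pm(x) := (\x-a(x))^2 - (\x-\phi_\pm)^2$ allows one to construct Jost solutions $u_\pm(x) \sim e^{\pm i k x}$ at $+\infty$, so any $L^2$ solution must be a combination of oscillatory solutions and cannot decay — unless it vanishes identically, by unique continuation. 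One must do this on whichever sides are finite and combine: if only $\phi_+$ is finite and $\phi_- = -\infty$, then $E$ must lie above $(\x-\phi_+)^2$, an $L^2$-eigenfunction decays at $-\infty$ automatically (confining side) but the $+\infty$ behavior forbids decay there. The careful point the authors flag — "the presence of essential spectrum for the fibered operator" — means one cannot just quote a black-box theorem; the cleanest route is probably a Rellich-type / virial argument or the explicit variation-of-constants estimate showing $\int^\infty |u|^2 = \infty$ unless $u \equiv 0$. Finally, item (iv): simplicity of eigenvalues of $\LL_\x$ follows from the fact that a one-dimensional Schrödinger operator on $\R$ has simple discrete eigenvalues (an $L^2$-solution of the second-order ODE is determined up to scalar by, e.g., the Wronskian argument), and analytic dependence on $\x$ is Kato's analytic perturbation theory applied to the type-(A) analytic family, each isolated simple eigenvalue depending analytically on the parameter.
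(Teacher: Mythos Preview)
Your proposal is correct and follows essentially the same route as the paper. Items (i), (iii), (iv) match almost exactly: the paper also treats (i) and the opening statements as standard, proves (iii) via a Jost--Pais ODE lemma (stated in the appendix) showing that under the $L^1$ hypothesis every solution behaves like $a e^{i\omega x}+b e^{-i\omega x}+o(1)$ (or $a+bx+o(1)$ at threshold) near $+\infty$ and hence cannot be in $L^2$ unless it vanishes, and proves (iv) by the Wronskian argument for simplicity combined with analyticity of the type-(A) family.

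The only genuine methodological difference is in (ii). You propose an IMS/Persson localization into a bounded piece and two half-lines; the paper instead writes $\LL_\xi$ directly as a relatively compact perturbation of a single model operator $D_x^2+V$ on all of $\R$ (with $V$ piecewise constant when both $\phi_\pm$ are finite, and suitably modified in the mixed case) and invokes Weyl's theorem in one stroke. Both arguments are standard and yield the same result; the paper's version is slightly shorter since it avoids cutting and the boundary issues of half-line operators, while your localization approach is perhaps more robust if one later wants quantitative information. One small point to tighten in your (iii): you only discuss $E>(\xi-\phi_+)^2$, i.e.\ $k>0$, but the essential spectrum includes the threshold; the paper's lemma explicitly covers the $\omega=0$ case (asymptotics $a+bx+o(1)$), which you should mention as well.
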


\begin{proof}
The first statements are standard. For (ii), if $\phi_-$ and $\phi_+$ are infinite then $\LL_\xi$ has a compact resolvent by the Riesz-Fr\'echet-Kolmogorov Theorem. If $\phi_-$ and $\phi_+$ are finite then $\LL_\xi$ is a relatively compact perturbation of $D_x^2 + V(x)$ where $V(x) = (\xi-\phi_-)^2$ for $x \leq 0$ and $V(x) = (\xi-\phi_+)^2$ for $x > 0$. We conclude with the Weyl Theorem. If $\phi_- \in \R$ and $\phi_+ = +\infty$ we conclude similarly by considering $V(x) = (\xi-\phi_-)^2$ if $x \leq 0$ and $V(x) = \max((\xi-\phi_-)^2, (\xi-a(x))^2)$ if $x > 0$. The other cases are similar. 

For (iii) we use Lemma \ref{lem.scat}. If $\phi_+ \in \R$ then for $\l \geq (\xi-\phi_+)^2$ we apply the lemma with $\o^2 = \l - (\xi-\phi_+)^2$ and $w = (\x-a(x))^2 - (\xi-\vf_+)^2 \in L^1(\R_+)$. This proves that $\l$ is not an eigenvalue. Similarly, if $\phi_-$ is finite then $\LL_\xi$ has no eigenvalue $\l \geq (\xi-\vf_-)^2$. 

Let us briefly recall why the eigenvalues of $\LL_\xi$ are simple. Assume that $u$ and $v$ are eigenfunctions of $\LL_{\xi}$ associated with the same eigenvalue $\lambda$. Letting $W=u_{1}u'_{2}-u'_{1}u_{2}$, we easily get $W'=0$, so that $W$ is constant. Since $u_{1}$ and $u_{2}$ belong to the domain, we get that $W$ is integrable, and thus that $W=0$. This shows that the family $(u_{1},u_{2})$ is not free. Combining the simplicity of the eigenvalues and the analyticity of the family, we finally get the analyticity of the eigenvalues.
\end{proof}

Let $\xi \in \R$. If $\LL_\xi$ has eigenvalues (necessarily simple and under the essential spectrum, according to Proposition \ref{prop-LLxi}), we label them by increasing order
\[
(\l_k(\xi)) _{1 \leq k \leq N_\xi}, \quad \text{with} \quad  \l_{k}(\xi) < \l_{k+1}(\xi), \quad 1 \leq k < N_\xi,
\]
for some $N_\xi \in \N \cup \set{+\infty}$.

When $\s_\ess(\mathscr{L}_\xi)=\emptyset$, the following proposition can be found in \cite[Theorem XIII.86]{rs4}. In this paper, the essential spectrum will not be empty in general. 

\begin{proposition} \label{prop-Sigma-lambda}
Let $\l \in \R$ and 
\begin{equation} \label{def-Sigma-lambda}
\Sigma_\l = \set{\x \in \R \st \l \notin \s_\ess(\LL_\xi)}.
\end{equation}
If $\l$ is an eigenvalue of $\LL$, then there exists $n \in \N^*$ and a connected component $I$ of $\Sigma_\l$ such that $\LL_\xi$ has at least $n$ eigenvalues for all $\xi \in I$ and
\[
\forall \xi \in I, \quad \l_n(\xi) = \l.
\]
\end{proposition}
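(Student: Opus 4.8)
The plan is to exploit the direct integral decomposition $\LL = \int^\oplus_\R \LL_\xi\,\mathrm d\xi$ together with the analytic dependence of the eigenvalues $\l_k(\xi)$ on $\xi$. Suppose $\l$ is an eigenvalue of $\LL$ with normalized eigenfunction $\psi$, so that $\psi_\xi \in \Dom(\LL_\xi)$ for almost every $\xi$ and $(\LL_\xi - \l)\psi_\xi = 0$ a.e. Since the eigenfunction is nonzero, the set $E = \set{\xi \in \R \st \psi_\xi \neq 0}$ has positive Lebesgue measure. For each $\xi \in E$, the function $\psi_\xi$ is a nonzero element of $\ker(\LL_\xi - \l)$, so $\l$ is an honest eigenvalue of $\LL_\xi$; by Proposition \ref{prop-LLxi}, eigenvalues of $\LL_\xi$ lie strictly below $\s_\ess(\LL_\xi)$ (there are no embedded eigenvalues under the stated $L^1$ hypotheses, but that assumption is not needed here — the point is that genuine $L^2$-eigenfunctions correspond to eigenvalues, and Proposition \ref{prop-LLxi}(iii) tells us these cannot be in the essential spectrum), hence $\l \notin \s_\ess(\LL_\xi)$. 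Therefore $E \subset \Sigma_\l$, and in particular $\Sigma_\l$ has positive measure. Since $\Sigma_\l$ is open (complement of $\xi \mapsto \s_\ess(\LL_\xi)$ containing $\l$; explicitly $\Sigma_\l = \set{\xi \st \min((\xi-\phi_-)^2,(\xi-\phi_+)^2) > \l}$ by Proposition \ref{prop-LLxi}(ii), when $\l \ge 0$, and all of $\R$ when $\l<0$), it is a countable union of open intervals, so at least one connected component $I$ has $I \cap E$ of positive measure.

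On the component $I$, each eigenvalue branch $\l_k(\xi)$ (defined for $\xi$ in some relatively open subset of $I$ where $\LL_\xi$ has at least $k$ eigenvalues) is real-analytic. For $\xi \in I \cap E$ we have $\l \in \{\l_k(\xi) \st 1 \le k \le N_\xi\}$, i.e.\ some branch passes through the value $\l$ at $\xi$. The next step is to show that some \emph{fixed} index $n$ works for a full subinterval: since there are only countably many branches and $I \cap E$ has positive measure, there is an $n$ such that $\set{\xi \in I \st \l_n(\xi) = \l}$ has positive measure — in particular is infinite. Here one must be slightly careful that each $\l_n$ is defined on an open subset of $I$; one uses that the $n$-th branch, where defined, is a real-analytic function of $\xi$, and that a real-analytic function on an interval which equals $\l$ on a set with an accumulation point is identically $\l$ on the connected component of its domain containing that point. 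I would then argue that the domain of the $n$-th branch exhausts all of $I$: on $\Sigma_\l$ one has $\s_\ess(\LL_\xi) \subset [\l + \delta, \infty)$ locally, and one controls the number of eigenvalues below $\l$ from below by a continuity/min-max argument, so that once $\l_n(\xi_0) = \l$ at one point $\xi_0 \in I$, the branch $\l_n$ extends analytically and stays equal to $\l$ throughout $I$ (it cannot be absorbed into the essential spectrum while taking the value $\l \notin \s_\ess(\LL_\xi)$). This gives $\LL_\xi$ has at least $n$ eigenvalues for all $\xi \in I$ with $\l_n(\xi) = \l$, as claimed.

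The main obstacle is the interplay between the presence of nonempty essential spectrum $\s_\ess(\LL_\xi)$ and the labeling of eigenvalue branches: unlike the classical Iwatsuka setting where $\LL_\xi$ has compact resolvent (so $\l_n$ is a globally defined analytic function on all of $\R$), here a branch could \emph{a priori} be created or destroyed at the threshold of the essential spectrum as $\xi$ varies, and the "$n$-th eigenvalue" is only locally meaningful. The delicate point is therefore to prove the \emph{global} (in $I$) existence of a branch that is constantly equal to $\l$: the key observations are that $\l < \inf\s_\ess(\LL_\xi)$ uniformly on $I$ only near points where we already know a branch exists — so one propagates along $I$ using analyticity of the (finite, by Proposition \ref{prop-LLxi} with the semibounded potential comparison) family of eigenvalues in $(-\infty, \l']$ for $\l' < \inf_{\xi \in K}\s_\ess(\LL_\xi)$ on compact $K \subset I$, combined with the fact that if $\l_n(\xi_0)=\l$ then by analyticity $\l_n \equiv \l$ near $\xi_0$, hence on all of $I$ by connectedness — and in particular $\l_n$ never reaches the (moving) threshold, so the count of eigenvalues below never drops below $n$. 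Everything else is a matter of assembling the measure-theoretic pigeonhole and the identity theorem for real-analytic functions.
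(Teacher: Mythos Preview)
Your plan is correct and follows essentially the same route as the paper: pass to the fibers via the direct integral, use absence of embedded eigenvalues (Proposition~\ref{prop-LLxi}(iii)) to place the positive-measure set $E$ inside $\Sigma_\l$, select an index $n$ (you pigeonhole over $n$; the paper instead picks a density point $\xi_0$ of $E$ inside a compact of some component $I$ and reads off $n$ from $\l_n(\xi_0)=\l$), apply the identity theorem to the analytic branch $\l_n$, and then propagate along $I$ by a continuity/compactness argument showing the $n$-th branch cannot disappear while pinned at the value $\l < \inf\s_\ess(\LL_\xi)$. One small correction: your parenthetical remark that the $L^1$ hypothesis of Proposition~\ref{prop-LLxi}(iii) ``is not needed here'' is mistaken --- ruling out embedded eigenvalues is precisely what gives $E \subset \Sigma_\l$, and the paper invokes it at exactly the same step.
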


\begin{proof}
Let $\l \in \R$ and $u \in \Dom(\LL)\setminus\{0\}$ be such that $\LL u = \l u$. For almost all $\xi\in\R$, we have
\[\LL_{\xi}u_{\xi}=\lambda u_{\xi}\,.\]
Consider
\[\Xi=\set{\xi\in\R : u_{\xi}\neq 0 }\,.\]
In particular, $\l$ is an eigenvalue of $\LL_\xi$ for all $\xi \in \Xi$, and hence, with Proposition \ref{prop-LLxi}, $\Xi \subset \Sigma_\l$. Moreover, $\Xi$ has positive Lebesgue measure, so there exist a connected component $I$ of $\Sigma_\l$ and a compact $K \subset I$ such that $K \cap \Xi$ has positive measure. Then there exists $\xi_0 \in K \cap \Xi$ such that $[\xi_0-\e,\xi_0+\e] \cap \Xi$ has positive measure for all $\e > 0$. Since $\xi_0 \in \Xi$, $\l$ is an eigenvalue of $\LL_{\xi_0}$, so there exists $n \in \N^*$ such that $\LL_{\xi_0}$ has at least $n$ eigenvalues and $\l_n(\xi_0) = \l$. By simplicity of the eigenvalues and continuity with respect to $\xi$, together with the non-negativeness of $\LL_\xi$ (so that the eigenvalues cannot escape to $-\infty$), there exists $\e > 0$ such that $\s(\LL_\xi) \cap [\l-\e,\l+\e] = \set{\l_n(\x)}$ for all $\xi \in [\xi_0 - \e , \xi_0 + \e]$. Since $\l_n$ is analytic and $\l_n(\x) = \l$ on a subset of $[\xi_0-\e,\xi_0 + \e]$ of positive measure, we have $\l_n(\x) = \l$ for all $\x \in [\xi_0-\e,\xi_0 + \e]$.

Assume by contradiction that there exists $\x \in I$ such that $\x > \x_0$ and $\LL_{\xi}$ does not have $n$ eigenvalues. Let
\[
\xi_1 = \sup \set{\xi \in I \st \LL_\xi \text{ has at least $n$ eigenvalues}} \quad \in I.
\]

By analycity we have $\l_n(\x) = \l$ for all $\xi \in [\xi_0,\xi_1)$. Moreover $[\xi_0,\xi_1]$ is a compact subset of $\Sigma_\l$, so $\l < \inf_{\x \in [\xi_0,\xi_1]} \inf \s_\ess(\LL_\xi)$. By continuity of the spectrum of $\LL_\xi$ around $\xi = \xi_1$ we obtain that $\LL_\xi$ has at least $n$ eigenvalues for $\xi$ on some neighborhood of $\xi_1$, which gives a contradiction. Then $\LL_\xi$ has at least $n$ eigenvalues for all $\xi \in I$ with $\xi \geq \xi_0$. The case $\xi \leq \xi_0$ is similar. Then $\l_n$ is defined on the whole interval $I$ and, by analycity, we have $\l_n(\xi) = \l$ for all $\xi \in I$.
\end{proof}
Note that, with these properties in hand, we can easily deduce Theorem \ref{th-conf-semibounded} \eqref{eq.thm1i}: the essential spectrum of $\mathscr{L}_\xi$ is empty so if $\l \in \R$ is an eigenvalue of $\LL$ there exists $n \in \N^*$ such that $\l_n(\xi) = \l$ for all $\xi \in \R$, which is impossible since the bottom of the spectrum of $\LL_\xi$ goes to $+\infty$ when $\xi\to\pm\infty$.

We can also easily prove the first statement of Theorem \ref{th-a-bounded}:

\begin{proof} [Proof of Theorem \ref{th-a-bounded}.\eqref{th-a-bounded-item-i}]
Note that, here, $h>0$ is fixed (and we may assume that $h=1$). 

Assume by contradiction that $\l \geq \frac {1} 4$ is an eigenvalue of $\LL$. By Proposition \ref{prop-LLxi} we have 
\[
\Sigma_\l = \R \setminus \big[ -\sqrt \l , 1 + \sqrt \l \big].
\]
Since $a$ is bounded, we have 
\[
\inf \s(\LL_\x) \limt \x {\pm \infty} + \infty.
\]
Then Proposition \ref{prop-Sigma-lambda} gives a contradiction.
\end{proof}

We cannot use the same argument when $a$ is surjective (since then we have $\inf_{x \in \R} (\xi-a(x))^2 = 0$ for all $\xi \in \R$) or when $a$ is bounded and $\l < \frac {1}{4}$ (because $\Sigma_\l$ has also a bounded connected component, see Figure \ref{fig.0}). 

To go further, we will use the harmonic approximation to estimate the eigenvalues of $\LL_\xi$.

\input figure.tex

\section{Harmonic approximation for moderately small eigenvalues}\label{sec.harmonic}

In this section, we prove a parameter dependent version of the classical harmonic approximation (see for instance \cite{S83, helffer1}). The main interest of Theorem \ref{th-hamonic-approx} below is that we consider eigenvalues which are ``not too small'' (in particular, much larger than the low lying eigenvalues, which are of order $\mathscr{O}(h)$).

Without this version of the harmonic approximation, one would only be able to prove the absence of eigenvalues below $Ch$ in Theorem \ref{th-a-bounded}.

We consider a family $(V_\th)_{\th \in \Th}$ of continuous and real-valued potentials on $\R$ which satisfies the following properties.
\begin{enumerate}[\rm (i)]
\item We can write
\[
V_\th(s) = s^2 v_\th^2 + s^3 w_\th(s)
\]
where, for some $v_-,v_+,C_w,N > 0$, we have 
\begin{equation} \label{eq-estim-v-w}
\forall \th \in \Th,\forall s \in \R, \quad v_- < v_\th < v_+ \quad \text{and} \quad \abs {w_\th(s)} \leq C_w \pppg s^N.
\end{equation}
In particular, there exists $\e_0 > 0$ such that
\[
\forall \th \in \Th, \forall s \in [-\e_0,\e_0], \quad  V_\th(s) \geq \frac {v_- s^2} 2.
\]
\item 
There exists $c_\infty > 0$ such that for $\th \in \Th$ and $s \in \R \setminus [-\e_0,\e_0]$ we have 
\begin{equation} \label{eq-minor-cinfty}
V_\th(s) > c_\infty
\end{equation}
\end{enumerate}

Then, for $h \in (0,1]$ and $\th \in \Th$, we consider the operator 
\[
\LL_{h,\th} = h^2 D_s^2 + V_\th(s),
\]
with domain
\[
\Dom(\LL_{h,\th}) = \set{u \in H^2(\R) \st V_\th u \in L^2(\R)}.
\]

We recall that, for $h \in (0,1]$, the spectrum of the operator $h^2 D_s + v_\th s^2$ is given by the sequence of simple eigenvalues $(2n-1) h v_\th$, $n \in \N^*$. We prove that for $h$ small enough the bottom of the spectrum $\LL_{h,\th}$ is given by simple eigenvalues close to those of this harmonic oscillator.

For $\th \in \Th$ and $h > 0$ we denote by 
\[
 0 <\l_1(h,\th) \leq \l_2(h,\th) \leq \cdots
\]
the eigenvalues of $\LL_{h,\th}$ under the essential spectrum, and we consider a corresponding orthonormal family $(\psi_{k,h,\th})$ of eigenvectors. Then for $E \in (0, \inf \Sp_\ess(\LL_{\th,h}))$ we denote by $N(E,h,\th)$ the number of eigenvalues of $\LL_{h,\th}$ (counted with multiplicities) smaller than $E$:
\[
N(E,h,\th) = \mathsf{max} \set{n \in \N^* \st \l_n(h,\th) \leq E}.
\]
For $\th \in \Th$, $h > 0$ and $n \in \N^*$ we set 
\[
\mathscr E_n(h,\th) = \mathsf{span} (\psi_{k,h,\th})_{1 \leq k \leq n}.
\]

We consider a family $(\etah)_{h > 0}$ of positive numbers such that 
\[
\etah \underset{h \to 0}{=} o \left( \frac 1 {\abs{\ln(h)}^6} \right).
\]

\begin{theorem} \label{th-hamonic-approx}
There exist $h_0> 0$ such that for $\th \in \Th$ and $h \in (0,h_0]$ we have $\etah < \inf \Sp_\ess(\LL_{h,\th})$ and
\begin{equation} \label{estim-N}
N(\etah,h,\th) \geq  \frac {\etah}{4v_+h} - 1\,. 
\end{equation}
Moreover, there exists a function $\e(h)$ converging to $0$ as $h \to 0$ such that, for all $n\in \{1,\dots, N(\etah,h,\th)\}$,
\begin{equation} \label{eq-harmonic-approx}
\abs{\l_n(h,\th) - (2n-1) h v_\th} \leq \e(h) \l_n(h,\th)\,.
\end{equation}
\end{theorem}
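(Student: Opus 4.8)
The plan is to treat $\LL_{h,\th}$ as a semiclassical perturbation of the harmonic oscillator $\Hc_{h,\th} := h^2 D_s^2 + v_\th^2 s^2$, uniformly in $\th$, and to track constants carefully enough that the eigenvalue estimates survive up to energies as large as $\etah$ (rather than merely $\mathcal O(h)$). The first step is a crude localization: using assumption (ii) (the lower bound $V_\th > c_\infty$ outside $[-\e_0,\e_0]$) together with an IMS-type partition of unity, I would show that any eigenfunction of $\LL_{h,\th}$ with eigenvalue $\le \etah \ll c_\infty$ is, up to an $\mathcal O(e^{-c/h})$ (or at worst polynomially small in $h$) error, concentrated in $[-\e_0,\e_0]$; this in particular gives $\etah < \inf \Sp_\ess(\LL_{h,\th})$ for $h$ small, and lets me replace $\LL_{h,\th}$ by a Dirichlet/Neumann realization on an interval where $V_\th(s) \ge \tfrac{v_-}{2} s^2$ and where the cubic remainder obeys $|s^3 w_\th(s)| \le C_w' |s|^3$.

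\textbf{Upper bounds on $\l_n$ and the counting estimate \eqref{estim-N}.} For the upper bound I would use the min-max principle with the first $n$ rescaled Hermite functions of $\Hc_{h,\th}$ as trial states, rescaled by $s \mapsto s/\sqrt h$ so that the natural length scale is $\sqrt h$; on such states $s^k$ has size $h^{k/2}$ in the relevant norms, so the cubic term $s^3 w_\th(s)$ contributes an error of relative size $\mathcal O(h^{1/2} \, \mathrm{poly}(n))$, and more precisely $\l_n(h,\th) \le (2n-1)h v_\th + C n^{?} h^{3/2}$ after controlling the $\pppg s^N$ growth of $w_\th$ against the Gaussian decay of Hermite functions (this is where I need $n$ not too large: $n \lesssim \etah/h = o(|\ln h|^{-6}/h)$, and $|\ln h|$-type bounds on Hermite function norms weighted by $\pppg s^N$ make the product $h^{1/2}\,\mathrm{poly}(n)\,\mathrm{poly}(|\ln h|)$ still $o(1)$). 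This immediately yields $N(\etah,h,\th) \ge \etah/(4 v_+ h) - 1$ by counting how many of the $(2n-1)hv_\th \le (2n-1)hv_+$ fall below $\etah$.

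\textbf{Lower bounds on $\l_n$.} Here I would compare from below: on the localized interval, $\LL_{h,\th} \ge h^2 D_s^2 + \tfrac{v_-}{2}s^2 - (\text{error from the cubic term and the cutoff})$. The cubic term is the delicate one because it is not bounded below by a multiple of $s^2$; I would split $s^3 w_\th(s) = -\delta v_\th^2 s^2 + (\text{something} \ge 0)$ on a slightly smaller interval $|s| \le \e_0'$ and absorb the bad part, or more robustly use the operator inequality $|s|^3 \le \eta s^2 + \eta^{-3} s^4$ and then control $h^2 D_s^2 + c s^2 - C\eta^{-3} s^4$ from below by the harmonic oscillator minus $\mathcal O(h^2 \eta^{-3} n^2)$ using that on states in $\mathscr E_n$ localized at scale $\sqrt h$ one has $\langle s^4 \rangle \lesssim h^2 n^2$ — a slightly circular argument that I would make rigorous by an a priori Agmon-type bound $\|s^2 \psi_{k,h,\th}\| \lesssim h \,\mathrm{poly}(k)\,\|\psi_{k,h,\th}\|$, proved first by the same localization/commutator technique. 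Combined with the matching upper bound this gives $|\l_n(h,\th) - (2n-1)hv_\th| \le \e(h)\,(2n-1)hv_\th \le \e(h)\l_n(h,\th)$ with a $\th$-uniform $\e(h) = \mathcal O(h^{1/2}\,\mathrm{poly}(|\ln h|)) \to 0$, which is \eqref{eq-harmonic-approx}.

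\textbf{Main obstacle.} The hard part is keeping every estimate \emph{uniform in $\th \in \Th$} while simultaneously pushing the range of valid $n$ up to $\etah/h$ rather than $\mathcal O(1)$. Uniformity is delivered by the hypotheses \eqref{eq-estim-v-w}–\eqref{eq-minor-cinfty} ($v_\th$ bounded above and below, $w_\th$ bounded by $C_w\pppg s^N$ with $N,C_w$ fixed, $c_\infty$ fixed), so it is really a bookkeeping issue; the genuinely constraining point is the interplay between the polynomial-in-$n$ losses coming from Hermite function norms (including the $\pppg s^N$ weight, which costs powers of $\sqrt{n}$ and of $|\ln h|$) and the requirement $\etah = o(|\ln h|^{-6})$ — this is exactly the exponent that makes $n \le \etah/h$ small enough that $h^{1/2} \cdot n^{C} \cdot |\ln h|^{C}$ still tends to zero. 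I would therefore organize the proof so that the dependence on $n$ and on $|\ln h|$ is made explicit at each step, and only at the end invoke the decay rate of $\etah$ to conclude $\e(h) \to 0$.
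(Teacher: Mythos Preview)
Your overall strategy --- upper bounds via rescaled Hermite trial functions, lower bounds by controlling the cubic remainder $\int |s|^3\pppg s^N|\psi|^2\,ds$ on the eigenspaces $\mathscr E_n(h,\th)$ --- is the paper's. But the error accounting has a real gap. You end with $\e(h) = \mathcal O(h^{1/2}\,\mathrm{poly}(|\ln h|))$, which is false for the range of $n$ at stake. The $n$-th rescaled Hermite function lives at scale $\sqrt{nh}$, not $\sqrt h$, and an eigenfunction of $\LL_{h,\th}$ at energy $\l$ is localized at scale $\sqrt\l$; hence $\int |s|^3|\psi|^2 \sim \l^{3/2}$ and the \emph{relative} error is of order $\l^{1/2} \leq \etah^{1/2}$. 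Since $\etah$ is only assumed $o(|\ln h|^{-6})$ (say $\etah = |\ln h|^{-7}$), this is far larger than any $h^{1/2}\,\mathrm{poly}(|\ln h|)$. In your write-up the variable $n$ silently disappears between ``relative size $\mathcal O(h^{1/2}\,\mathrm{poly}(n))$'' and the final $\e(h)$; but $n$ ranges up to $\etah/h$, which is not polylogarithmic in $h$, so that step is illegitimate. As a result you never actually explain where the exponent $-6$ enters.

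The paper closes this via a sharp eigenvalue-dependent Agmon estimate (Proposition~\ref{prop-exp-decay}): for any eigenpair $(\l,\psi)$ with $\l$ below a fixed threshold one has $\int_\R e^{2\g|s|/\sqrt\l}|\psi|^2\,ds \leq C\|\psi\|^2$ with $\g,C$ uniform in $\th$ and $h$ (this uses $\l \geq hv_-/2$ from Lemma~\ref{lem-inf-Sp}). One then splits the cubic integral at $|s| = \tfrac{2}{\g}\sqrt{\l_n}\,|\ln h|$: the inner part contributes $\l_n^{3/2}|\ln h|^3$, the outer part contributes $nh^2$ by the exponential decay, and the relative error becomes $\etah^{1/2}|\ln h|^3 + Ch$; the hypothesis $\etah = o(|\ln h|^{-6})$ is precisely what makes the first term vanish. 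You do gesture at an ``Agmon-type bound $\|s^2\psi_k\| \lesssim h\,\mathrm{poly}(k)$'' to break the acknowledged circularity of your $|s|^3 \leq \eta s^2 + \eta^{-3}s^4$ scheme, but that bound is the engine of the proof, not a side remark, and it must be stated at the $\sqrt\l$ scale rather than the $\sqrt h$ scale. Your preliminary IMS localization to the fixed interval $[-\e_0,\e_0]$ is harmless but irrelevant: all the work happens at the much finer scale $\sqrt{\etah}$.
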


\begin{remark}
From \eqref{eq-harmonic-approx} we obtain that for $h$ small enough we have $\l_n(h,\th) \lesssim (2n-1) h v_\th$, so with a possibly different function $\e$ we can rewrite \eqref{eq-harmonic-approx} as 
\begin{equation} \label{eq-harmonic-approx-2}
\abs{\l_n(h,\th) - (2n-1) h v_\th} \leq \e(h) (2n-1) h.
\end{equation}
\end{remark}

The proof of Theorem \ref{th-hamonic-approx} relies on the classical Agmon Formula (see for instance \cite[Prop. 4.7]{raymond}):

\begin{proposition} \label{prop-Agmon}
Let $\Phi$ be a real-valued, Lipschitzian and bounded function on $\R$. Then for $h > 0$, $\th \in \Th$ and $u \in \Dom(\LL_{h,\th})$ we have 
\[
\int_\R \abs{hD(e^{\Phi} u)}^2 \, \mathrm{d}\s + \int_\R \big( V_\th - h^2|\Phi'|^2  \big) e^{2\Phi} \abs u^2 \, \mathrm{d}\s = \Re \innp{\LL_{h,\th}u}{e^{2\Phi}u}.
\] 
In particular, if $(\l,u)$ is an eigenpair of $\LL_{h,\th}$ then 
\[
\int_\R \abs{hD(e^{\Phi} u)}^2 \, \mathrm{d}\s + \int_\R \big( V_\th - h^2|\Phi'|^2  - \l \big) e^{2\Phi} \abs u^2 \, \mathrm{d}\s = 0.
\]
\end{proposition}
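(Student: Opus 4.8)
The plan is to prove the identity by a single integration by parts, arranged so that only $\Phi\in W^{1,\infty}(\R)$ is ever used, never a second derivative of $\Phi$. First I would dispose of the potential part: since $u\in\Dom(\LL_{h,\th})$ we have $V_\th u\in L^2(\R)$, hence $(V_\th u)\overline{e^{2\Phi}u}=V_\th e^{2\Phi}\abs u^2\in L^1(\R)$ by Cauchy--Schwarz and boundedness of $e^{2\Phi}$, and this quantity is real, so $\innp{V_\th u}{e^{2\Phi}u}=\int_\R V_\th e^{2\Phi}\abs u^2$. It then remains to treat the kinetic part $h^2\Re\innp{D^2u}{e^{2\Phi}u}$.

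Since $u\in H^2(\R)$, both $u$ and $u'$ tend to $0$ at $\pm\infty$, and since $\Phi$ is bounded and Lipschitz the function $e^{2\Phi}u$ lies in $H^1(\R)$ with $(e^{2\Phi}u)'=2\Phi'e^{2\Phi}u+e^{2\Phi}u'$ a.e.; therefore the boundary terms vanish in the integration by parts $\innp{D^2u}{e^{2\Phi}u}=\innp{Du}{D(e^{2\Phi}u)}$. Expanding $D(e^{2\Phi}u)=e^{2\Phi}Du-2i\Phi'e^{2\Phi}u$ and using $(\abs u^2)'=2\Re(\bar u u')$, I would obtain
\[
\Re\innp{Du}{D(e^{2\Phi}u)}=\int_\R e^{2\Phi}\abs{Du}^2+\int_\R\Phi'e^{2\Phi}(\abs u^2)'.
\]
The key point is that I would \emph{not} integrate this last term by parts, since doing so would create $\Phi''$, which need not exist.

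Running the left-hand side through the same pointwise algebra, $D(e^\Phi u)=e^\Phi Du-i\Phi'e^\Phi u$ yields
\[
\abs{D(e^\Phi u)}^2=e^{2\Phi}\abs{Du}^2+\abs{\Phi'}^2e^{2\Phi}\abs u^2+\Phi'e^{2\Phi}(\abs u^2)',
\]
again by $(\abs u^2)'=2\Re(\bar u u')$. Multiplying by $h^2$, then adding $\int_\R V_\th e^{2\Phi}\abs u^2$ and subtracting $\int_\R h^2\abs{\Phi'}^2e^{2\Phi}\abs u^2$, the terms $h^2\abs{\Phi'}^2e^{2\Phi}\abs u^2$ cancel and I am left with $h^2\big(\int_\R e^{2\Phi}\abs{Du}^2+\int_\R\Phi'e^{2\Phi}(\abs u^2)'\big)+\int_\R V_\th e^{2\Phi}\abs u^2$, which by the two previous paragraphs equals $h^2\Re\innp{D^2u}{e^{2\Phi}u}+\int_\R V_\th e^{2\Phi}\abs u^2=\Re\innp{\LL_{h,\th}u}{e^{2\Phi}u}$. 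This is the asserted formula. The ``in particular'' statement is then immediate: if $\LL_{h,\th}u=\l u$ with $\l\in\R$ and $u$ an eigenfunction, then $\Re\innp{\LL_{h,\th}u}{e^{2\Phi}u}=\l\int_\R e^{2\Phi}\abs u^2$, and one moves this term to the left-hand side.

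There is no real obstacle here: the only things to verify are the bookkeeping facts used above — integrability of $V_\th e^{2\Phi}\abs u^2$, the membership $e^{2\Phi}u\in H^1(\R)$, and the vanishing of the boundary terms — all immediate from $u\in H^2(\R)$ and $\Phi\in W^{1,\infty}(\R)$. The feature that makes the statement clean, and which I would want to stress, is that the cross terms $\Phi'e^{2\Phi}(\abs u^2)'$ produced on the two sides are literally the same and cancel \emph{before} any further integration by parts, so that no $\Phi''$ is needed — a fact slightly obscured if one instead argues via the conjugation $e^{\Phi}\LL_{h,\th}e^{-\Phi}=(hD+ih\Phi')^2+V_\th$ and then takes $\Re\innp{\,\cdot\,e^{\Phi}u}{e^{\Phi}u}$.
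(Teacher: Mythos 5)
Your argument is correct and is essentially the standard proof of this identity (the paper itself gives no proof, deferring to \cite[Prop.~4.7]{raymond}): expand $\Re\innp{\LL_{h,\th}u}{e^{2\Phi}u}$ by one integration by parts, justified since $e^{2\Phi}u\in H^1(\R)$ and $u,u'\to 0$ at infinity, and observe that the cross terms $\Phi'e^{2\Phi}(\abs u^2)'$ on the two sides coincide pointwise, so that only $\Phi\in W^{1,\infty}(\R)$ is ever used and no $\Phi''$ appears. The bookkeeping points you flag (integrability of $V_\th e^{2\Phi}\abs u^2$, the a.e.\ chain and product rules for Lipschitz $\Phi$, reality of $\l$ in the eigenpair case) are exactly the ones that need checking, and you handle them correctly.
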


On the other hand, the following lemma is an easy consequence of Proposition 4.4 in \cite{raymond}, where we check that the rest is estimated uniformly in $\th \in \Th$.

\begin{lemma} \label{lem-inf-Sp}
There exists $h_0 > 0$ such that for all $\th \in \Th$ and $h \in (0,h_0)$ we have 
\[
\inf \Sp(\LL_{h,\th}) \geq \frac {h v_\th} 2.
\]
\end{lemma}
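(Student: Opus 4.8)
The plan is to combine a fixed-scale IMS localization near the bottom of the well at $s=0$ with the uniform positivity \eqref{eq-minor-cinfty} away from it; the mechanism is that the harmonic-oscillator ground-state energy near $s=0$ is of order $h$, hence dominates the $O(h^2)$ localization error once $h$ is small. Throughout, all constants will depend only on the $\th$-independent data $v_\pm, C_w, N, \e_0, c_\infty$.

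First I would record a sharp quadratic lower bound near $0$ that keeps track of the leading coefficient: writing $V_\th(s)=v_\th^2 s^2 + s^3 w_\th(s)\geq v_\th^2 s^2\big(1-v_-^{-2}C_w\abs s\pppg s^N\big)$ by \eqref{eq-estim-v-w}, one finds $\delta_0\in(0,\e_0]$, depending only on $v_-,C_w,N$, with $V_\th(s)\geq \tfrac34 v_\th^2 s^2$ for all $\th\in\Th$ and $\abs s\leq\delta_0$. Combining this with \eqref{eq-minor-cinfty} on $\abs s>\e_0$ and with $V_\th(s)\geq\tfrac34 v_\th^2(\delta_0/2)^2\geq\tfrac{3}{16}v_-^2\delta_0^2$ on $\delta_0/2\leq\abs s\leq\e_0$ produces a $\th$-independent constant $c_3>0$ with $V_\th\geq c_3$ on $\{\abs s\geq\delta_0/2\}$.

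Next, fix once and for all $\chi_1,\chi_2\in\mathscr C^\infty(\R,[0,1])$ with $\chi_1^2+\chi_2^2=1$, $\supp\chi_1\subset[-\delta_0,\delta_0]$, $\supp\chi_2\cap[-\delta_0/2,\delta_0/2]=\emptyset$, and set $M=\nr{\chi_1'}_\infty^2+\nr{\chi_2'}_\infty^2$, all independent of $h$ and $\th$. For $u\in\Dom(\LL_{h,\th})$ one has $\chi_j u\in\Dom(\LL_{h,\th})$ (since $\chi_j$ is bounded with bounded derivatives), and the IMS identity gives $\innp{\LL_{h,\th}u}{u}=\sum_j\innp{\LL_{h,\th}(\chi_j u)}{\chi_j u}-h^2\sum_j\nr{\chi_j'u}^2$. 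On $\supp\chi_2$ the corresponding term is bounded below by $c_3\nr{\chi_2 u}^2$ using Step~1, while on $\supp\chi_1$ I use $V_\th\geq\tfrac34 v_\th^2 s^2$ together with the elementary bound $\nr{hD_s\phi}^2+\omega^2\nr{s\phi}^2\geq h\omega\nr\phi^2$, valid for $\omega>0$ and $\phi\in H^1(\R)$ compactly supported by expanding $\nr{(hD_s-i\omega s)\phi}^2\geq0$, with $\omega=\sqrt{3/4}\,v_\th$ and $\phi=\chi_1 u$, to get $\innp{\LL_{h,\th}(\chi_1 u)}{\chi_1 u}\geq\sqrt{3/4}\,v_\th h\,\nr{\chi_1 u}^2$.

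Finally, since $c_3$ is fixed and $v_\th<v_+$, the condition $h\leq c_3/(\sqrt{3/4}\,v_+)$ makes the $\chi_2$-term at least $\sqrt{3/4}\,v_\th h\,\nr{\chi_2 u}^2$ as well, so $\innp{\LL_{h,\th}u}{u}\geq(\sqrt{3/4}\,v_\th h-Mh^2)\nr u^2$; as $\sqrt{3/4}>1/2$ and $v_\th>v_-$, imposing moreover $h\leq(\sqrt{3/4}-1/2)v_-/M$ gives $\innp{\LL_{h,\th}u}{u}\geq\tfrac12 v_\th h\,\nr u^2$ for all $\th\in\Th$, and since $\Dom(\LL_{h,\th})$ is a form core this yields the claim with $h_0=\min\big(1,\,c_3/(\sqrt{3/4}\,v_+),\,(\sqrt{3/4}-1/2)v_-/M\big)$. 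This is essentially the textbook harmonic lower bound; there is no serious obstacle, the one point deserving attention being uniformity in $\th$. That is exactly why the near-well estimate is kept in the form $\tfrac34 v_\th^2 s^2$ rather than using the cruder $V_\th\geq v_- s^2/2$ from hypothesis~(i) (which would only give $\sqrt{v_-/2}\,h$, not necessarily exceeding $\tfrac12 v_\th h$), and why $\delta_0$, $c_3$, $M$, $h_0$ are built only from the $\th$-independent data.
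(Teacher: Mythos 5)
Your proof is correct and self-contained, whereas the paper simply cites Proposition~4.4 of \cite{raymond} together with a remark that the remainders there are uniform in $\th$; so you have taken a genuinely more explicit route. The IMS localization at a fixed scale $\delta_0$, combined with the commutator bound $\nr{hD_s\phi}^2+\omega^2\nr{s\phi}^2\geq h\omega\nr{\phi}^2$ near the well and the $\th$-uniform floor $c_3$ away from it, is the standard mechanism behind the cited result, and you correctly identify what must be tracked for the uniformity in $\th$. In particular you are right that keeping the near-well quadratic lower bound in the form $\tfrac34 v_\th^2 s^2$, rather than the cruder $\th$-uniform one from hypothesis~(i), is necessary: the target $\tfrac12 h v_\th$ involves $v_\th$, not $v_-$, and the harmonic lower bound scales as the square root of the quadratic coefficient, so a $\th$-independent coefficient would not in general yield the $\th$-dependent conclusion.

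One detail should be fixed. You assert $V_\th(s)\geq\tfrac34 v_\th^2(\delta_0/2)^2$ on the whole range $\delta_0/2\leq\abs s\leq\e_0$, but your first step only gives $V_\th(s)\geq\tfrac34 v_\th^2 s^2$ for $\abs s\leq\delta_0$. If $\delta_0<\e_0$, then on $\delta_0<\abs s\leq\e_0$ you should instead invoke hypothesis~(i) directly, which yields $V_\th(s)\geq\tfrac12 v_- s^2\geq\tfrac12 v_-\delta_0^2$. Taking $c_3=\min\big(\tfrac{3}{16}v_-^2\delta_0^2,\ \tfrac12 v_-\delta_0^2,\ c_\infty\big)$ gives the $\th$-independent floor on $\{\abs s\geq\delta_0/2\}$ that your IMS estimate needs, and the rest of your argument (and the resulting $h_0$) goes through unchanged. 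This is a constant-bookkeeping slip, not a gap in the argument.
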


The following result about the uniform exponential decay of the eigenfunctions has its own interest:

\begin{proposition}\label{prop-exp-decay}
Let $h_0 > 0$ be as in Lemma \ref{lem-inf-Sp}. For
\[
 E  \in \Big( 0, \liminf_{\abs x \to +\infty} \inf_{\th \in \Th} V_\th(x) \Big)
\]
there exist $\gamma>0$ and $C > 0$ such that for $h\in(0,h_{0})$, $\th \in \Th$ and an eigenpair $(\l,\p)$ of $\LL_{h,\th}$ with $\l \leq E$ we have
\[\int_{\R}e^{2\g|s|/\sqrt{\lambda}}|\psi|^2\mathrm{d}s\leq C\|\psi\|_{L^2(\R)}^2\,.\]
\end{proposition}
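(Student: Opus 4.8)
The plan is to use the Agmon formula of Proposition~\ref{prop-Agmon} with a weight built from the Agmon distance to the classically allowed region, the whole point being to keep every constant uniform in $\th$, in $\lambda\le E$ and in $h$ small. I would fix $\epsilon=\frac12$, put $\lambda_0=c_\infty/2$, and distinguish the regimes $\lambda<\lambda_0$ and $\lambda\ge\lambda_0$. Assume first $\lambda<\lambda_0$. Then $V_\th>c_\infty\ge2\lambda$ on $\R\setminus[-\e_0,\e_0]$ and $V_\th\ge v_-s^2/2$ on $[-\e_0,\e_0]$, so $\{V_\th\le\lambda\}$ is a compact set containing a neighbourhood of $0$, and $\{V_\th<2\lambda\}\subset\{\,|s|<2\sqrt{\lambda/v_-}\,\}$. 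Introduce the $h^{-1}$-rescaled Agmon distance $d(s)=h^{-1}\,\mathrm{dist}\big(s,\{V_\th\le\lambda\}\big)$ for the metric $\sqrt{(V_\th-\lambda)_+}\,|\mathrm{d}s|$: it vanishes on $\{V_\th\le\lambda\}$, is locally Lipschitz, tends to $+\infty$ as $|s|\to+\infty$, and satisfies $h^2|d'|^2=(V_\th-\lambda)_+$ a.e.

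Applying Proposition~\ref{prop-Agmon} with $\Phi=(1-\epsilon)\min(d,R)$ and throwing away the nonnegative terms $\int|hD(e^\Phi\psi)|^2$ and $\int_{\{d\ge R\}}(V_\th-\lambda)e^{2\Phi}|\psi|^2$ would yield, since $2\epsilon-\epsilon^2\ge\epsilon$ and $d\equiv0$ on $\{V_\th\le\lambda\}$,
\[
\epsilon\int_{\{V_\th>\lambda,\;d<R\}}(V_\th-\lambda)\,e^{2(1-\epsilon)d}\,|\psi|^2\;\le\;\int_{\{V_\th\le\lambda\}}(\lambda-V_\th)\,|\psi|^2\;\le\;\lambda\,\|\psi\|^2 .
\]
On $\{V_\th\ge2\lambda\}$ one has $V_\th-\lambda\ge\lambda$, so dividing by $\epsilon\lambda$ and letting $R\to+\infty$ (monotone convergence) would give the key \emph{uniform} bound
\[
\int_{\{V_\th\ge2\lambda\}}e^{2(1-\epsilon)d}\,|\psi|^2\;\le\;\epsilon^{-1}\|\psi\|^2 ,
\]
the essential feature being that the factor $\lambda$ has cancelled.

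It then remains to compare $\gamma|s|/\sqrt\lambda$ with $(1-\epsilon)d$. On $\{V_\th<2\lambda\}$ the weight is trivially controlled, $e^{2\gamma|s|/\sqrt\lambda}\le e^{4\gamma/\sqrt{v_-}}$, so that region contributes at most $e^{4\gamma/\sqrt{v_-}}\|\psi\|^2$. On the two unbounded components of $\{V_\th>\lambda\}$ one has $d'=h^{-1}\sqrt{V_\th-\lambda}$, hence on $\{V_\th\ge2\lambda\}$, using $V_\th-\lambda\ge\lambda$ and the bound $\lambda\ge hv_-/2$ of Lemma~\ref{lem-inf-Sp},
\[
(1-\epsilon)\,d'(s)\;\ge\;\frac{(1-\epsilon)\sqrt\lambda}{h}\;\ge\;\frac{\gamma}{\sqrt\lambda}\qquad\text{provided}\quad\gamma\le\frac{(1-\epsilon)v_-}{2}.
\]
Integrating from a point where $V_\th=2\lambda$ (which lies in $\{|s|\le2\sqrt{\lambda/v_-}\}$) gives $\gamma|s|/\sqrt\lambda\le(1-\epsilon)d(s)+2\gamma/\sqrt{v_-}$ on $\{V_\th\ge2\lambda\}$, so $e^{2\gamma|s|/\sqrt\lambda}\le e^{4\gamma/\sqrt{v_-}}e^{2(1-\epsilon)d(s)}$ there; combined with the previous bound this gives $\int_\R e^{2\gamma|s|/\sqrt\lambda}|\psi|^2\le e^{4\gamma/\sqrt{v_-}}(1+\epsilon^{-1})\|\psi\|^2$.

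In the regime $\lambda_0\le\lambda\le E$ the weight rate $\gamma/\sqrt\lambda$ is bounded by $\gamma/\sqrt{\lambda_0}$, and it suffices to prove plain exponential decay of $\psi$ at that bounded rate. Choosing $E'\in\big(E,\liminf_{|x|\to+\infty}\inf_{\th}V_\th(x)\big)$ and $R_0$ with $V_\th\ge E'$ on $\{|s|\ge R_0\}$, one would again apply Proposition~\ref{prop-Agmon}, now with $\Phi=\gamma\min(|s|,R)/\sqrt\lambda$, so that $h^2|\Phi'|^2=h^2\gamma^2/\lambda\le h^2\gamma^2/\lambda_0\to0$; the genuine positive barrier $V_\th-\lambda\ge E'-E$ on $\{|s|\ge R_0\}$, which for $h$ small beats $\gamma^2/\lambda_0$, forces decay there, while on $\{|s|<R_0\}$ the weight is at most $e^{2\gamma R_0/\sqrt{\lambda_0}}$; one concludes exactly as before. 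Taking $\gamma$ to be the smaller of the two admissible thresholds and $h_0$ small enough finishes the argument. I expect the only genuine difficulty to be the degeneracy of the barrier $V_\th-\lambda$ near the turning points of $V_\th$: there a naive division of the Agmon inequality by the barrier would cost a factor $\lambda^{-1}$ and ruin the uniformity; the way around, as above, is to work with the full Agmon weight and use that $V_\th-\lambda\ge\lambda$ on $\{V_\th\ge2\lambda\}$, combined with the lower bound $\lambda\gtrsim h$ provided by Lemma~\ref{lem-inf-Sp}.
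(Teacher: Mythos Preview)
Your argument is correct, but it takes a more circuitous route than the paper's and has one loose end. The paper applies the Agmon identity directly with the truncated \emph{linear} weight $\Phi_\e(s)=\min(\g|s|/\sqrt\l,\,1/\e)$ and never introduces the Agmon distance or a regime split. The key computation is that, with $\g=v_-\sqrt\k/2$ and using $\l\ge hv_-/2$ from Lemma~\ref{lem-inf-Sp}, one has $h^2\g^2/\l\le\k\l$; combined with a uniform lower bound of the form $V_\th(s)\ge\min(c_E s^2,(1+2\k)E)$ (obtained by choosing $c_E$ and $\k$ small), this yields $V_\th-(1+\k)\l\ge\k\l$ on $\{|s|\ge R\sqrt\l\}$ for a fixed $R$, and the factor $\l$ cancels exactly as you anticipate in your closing remark. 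Your detour through the Agmon distance $d$ and the split into the two regimes $\l<c_\infty/2$ and $\l\ge c_\infty/2$ achieves the same cancellation, but with more bookkeeping: you first prove decay in the weight $e^{(1-\e)d}$, then translate back to the linear weight by comparing $d'$ with $\g/\sqrt\l$ on $\{V_\th\ge 2\l\}$. Both approaches rest on the same two ingredients---the quadratic lower bound on $V_\th$ near $0$ and the a priori bound $\l\gtrsim h$---so neither is more general; the paper's is simply shorter and handles all $\l\le E$ at once.

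One point needs fixing. In your second regime $\l\ge\l_0$ you require $h^2\g^2/\l_0<E'-E$ and propose to achieve this by taking ``$h_0$ small enough''. But the statement fixes $h_0$ as the constant of Lemma~\ref{lem-inf-Sp}, so you are not free to shrink it. The remedy is simply to choose $\g$ small instead (any $\g\le h_0^{-1}\sqrt{(E'-E)\l_0}$ will do), which makes the constraint hold for every $h\in(0,h_0)$.
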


\begin{proof}
There exist $\kappa \in (0,1)$ and $c_E > 0$ such that for all $\th \in \Th$ and $s \in \R$ we have 
\begin{equation} \label{eq-min-Vth}
V_\th(s) \geq \min \big( c_E s^2 , (1+2\k) E \big).
\end{equation}
Then we set 
\[
\g = \frac {v_- \sqrt \k}{2} > 0,
\]
where $v_-$ is given by \eqref{eq-estim-v-w}. Let $\th \in \Th$ and $h \in (0,h_0)$. Let $(\l,\psi)$ be an eigenpair of $\LL_{h,\th}$ with $\l \leq E$. For $\e > 0$ and $s \in \R$ we set 
\[\Phi_{\varepsilon} (s) = \min \left( \frac {\g |s|}{\sqrt{\lambda}}, \frac 1 \varepsilon \right).\] 
Proposition \ref{prop-Agmon} gives
\[\int_\R \left(V_\theta(s)- \frac {h^2\g^2}{\lambda}-\lambda \right)e^{2\Phi_\varepsilon}|\psi|^2\mathrm{d}s\leq 0\,.\]
By Lemma \ref{lem-inf-Sp} we have $\lambda\geq\frac{hv_-}{2}$, so
\[
\int_\R \big(V_\theta(s)-(1+\k)\l\big)e^{2\Phi_\varepsilon}|\psi|^2\mathrm{d}s
\leq 0\,.
\]
We choose $R>0$ so large that $c_ER^2-(1+\k) \geq \k$. Then we write
\[\int_{|s|\geq R\sqrt{\lambda}} (V_\theta(s)-(1+\kappa)\lambda)e^{2\Phi_\varepsilon}|\psi|^2\mathrm{d}s\leq -\int_{|s| < R\sqrt{\lambda}} (V_\theta(s)-(1+\k)\lambda)e^{2\Phi_\varepsilon}|\psi|^2\mathrm{d}s\,.\]
There exists $c_+ > 0$ such that $0 \leq V_\th(s) \leq c_+ s^2$ for all $\th \in \Th$ and $\abs s \leq R \sqrt E$, so with \eqref{eq-min-Vth} we have
\[\k \l \int_{|s|\geq R\sqrt{\lambda}} e^{2\Phi_\varepsilon}|\psi|^2\mathrm{d}s\leq  \l  (c_+ R^2 + 1 + \k)\int_{|s| < R\sqrt{\lambda}} e^{2\Phi_\varepsilon}|\psi|^2\mathrm{d}s\,,\]
and hence 
\[
\int_{\R} e^{2\Phi_\varepsilon}|\psi|^2\mathrm{d}s\leq \big(1 + \k\inv(c_+ R^2 + 1 + \k) \big)e^{2\g R^2} \int_{|s| < R\sqrt{\lambda}} |\psi|^2\mathrm{d}s.
\]
It only remains to let $\e$ go to 0 to conclude.
\end{proof}

Now we can prove Theorem \ref{th-hamonic-approx}.

\begin{proof}[Proof of Theorem \ref{th-hamonic-approx}]
There exists $h_0 \in (0,1]$ such that Lemma \ref{lem-inf-Sp} holds and for all $h \in (0,h_0]$ and $\th \in \Th$ we have 
\[
\etah \leq \frac {c_\infty} 2 < c_\infty  \leq \inf \Sp_\ess(\LL_{h,\th}),
\]
where $c_\infty$ is given by \eqref{eq-minor-cinfty}.

Let $h \in (0,h_0]$ and $n\in\{1,\ldots, N(\etah,h,\theta)\}$. For $\psi\in \mathscr{E}_{n}(h,\theta)$ with $\nr{\psi}_{L^2(\R)} = 1$ we have
\begin{equation}\label{eq.ubQl}
\langle\LL_{h,\th}\psi,\psi\rangle\leq \lambda_{n}(h,\th)\,.
\end{equation}
On the other hand, by \eqref{eq-estim-v-w},
\begin{equation}\label{eq.Tayl3}
\langle\LL_{h,\th}\psi,\psi\rangle\geq\langle(h^2D^2_{s}+v^2_{\th}s^2)\psi,\psi\rangle-C_w\int_\R |s|^3\langle s\rangle^N|\psi|^2\mathrm{d}s\,.
\end{equation}
Let $\g$ be given by Proposition \ref{prop-exp-decay} for $E = c_\infty / 2$. We set 
\[
\a_n(h,\th) = \frac 2 \gamma \sqrt {\l_n(h,\th)}  \abs{\ln(h)}.
\]
Since $\a_n(h,\th)$ is bounded uniformly in $\th \in \Th$, $h \in (0,h_0]$ and $n \leq N(\etah,h,\th)$, we have
\begin{equation} \label{eq-int-alpha-1}
\int_{\abs s \leq \a_n(h,\th)} |s|^3\langle s\rangle^N|\psi|^2\mathrm{d}s \lesssim \lambda_n(h,\theta)^{\frac{3}{2}} \abs{\ln(h)}^3 .
\end{equation}
Then we consider $c_1,\dots,c_n \in \C$ such that
\[\psi=\sum_{j=1}^n c_j\psi_{j,h,\theta}\,.\]
By the triangle inequality and Proposition \ref{prop-exp-decay} we have
\begin{equation} \label{eq-int-alpha-2}
\begin{aligned}
\big\||s|^\frac{3}{2}\langle s\rangle^{\frac{N}{2}}\psi\big\|_{L^2(|s|\geq \a(h,\th))}
& \leq \sum_{j=1}^n |c_j| \big\||s|^\frac{3}{2}\langle s\rangle^{\frac{N}{2}}\psi_{j,h,\theta}\big\|_{L^2(|s|\geq \a(h,\th))}\\
& \lesssim e^{-\frac {\g \a(h,\th)}{2 \sqrt {\l_n(h,\th)}}} \sum_{j=1}^n |c_j| \Big\|e^{\frac {\g \abs s}{\sqrt {\l_n(h,\th)}}} \psi_{j,h,\theta} \Big\|_{L^2(|s|\geq \a(h,\th))}\\
& \lesssim h \sum_{j=1}^n |c_j|\\
& \lesssim h \sqrt n.
\end{aligned}
\end{equation}
With \eqref{eq-int-alpha-1} and \eqref{eq-int-alpha-2} we get, for some $c > 0$ independant of $\th$, $h$ or $n$,
\[
\langle\LL_{h,\th}\psi,\psi\rangle\geq\langle(h^2D^2_{s}+v^2_{\th}s^2)\psi,\psi\rangle-c\left(\lambda_n(h,\theta)^{\frac{3}{2}}\abs{\ln(h)}^3+nh^2\right)\,.
\]
This, with \eqref{eq.ubQl} and the min-max Theorem, implies that
\begin{equation} \label{estim-lambda-n}
\lambda_{n}(h,\theta)\geq (2n-1)hv_\theta-c\left(\lambda_n(h,\theta)^{\frac{3}{2}}\abs{\ln(h)}^3+nh^2\right)\,.
\end{equation}
In particular, if $h_0$ was chosen small enough, there exists $C > 0$ such that, for $h \in (0,h_0]$, $\th \in \Th$ and $n \leq N(\etah,h,\th)$, 
\begin{equation} \label{eq-nh}
nh \leq C \l_n(h,\th)\,.
\end{equation}
Then \eqref{estim-lambda-n} yields
\begin{equation} \label{eq-lamda-n-geq}
(2n-1) h v_\th - \l_n(h,\th) \leq \e_1(h) \l_n(h,\th)\,,
\end{equation}
where 
\[
\e_1(h) = c\left(\lambda_n(h,\theta)^{\frac{1}{2}}\abs{\ln(h)}^3+C h\right) \limt h 0 0\,.
\]

For $n \in \N^*$ we denote by $f_n$ the $n$-th Hermite function. It solves on $\R$
\[
\big( D_\s^2 +\sigma^2 - (2n-1) \big) f_{n}(\s) = 0 \,.\]
Then for $h > 0$, $\th \in \Th$ and $n \in \N^*$ we set 
\[f_{n,h,\th} : s \mapsto h^{-\frac{1}{4}}v^{\frac{1}{4}}_{\th}f_{n} \Big( h^{-\frac{1}{2}}v^{\frac{1}{2}}_{\th}s \Big)\,.\]
We have $\nr{f_{n,h,\th}} = 1$ and 
\[\big( h^2 D^2_{s}+s^2v^2_{\th}-(2n-1)hv_{\th}\big)f_{n,h,\th}(s)=0\,.\]
For $f$ in $\mathsf{span}(f_{j,h,\th})_{1 \leq j \leq n}$ with $\nr{f}_{L^2(\R)}^2 = 1$ we have 
\[
(2n-1)hv_\th \geq \innp{(h^2 D_s + v_\th^2 s^2)f}{f} \geq \innp{\LL_{h,\th} f}{f} - C_w \int_\R \abs s^3 \pppg s^N \abs f ^2 \, ds.
\]
Following the same lines as above we obtain, for some $C_2 > 0$,
\[
C_w \int_\R \abs s^3 \pppg s^N \abs f ^2 \, ds \leq \rho(n,h) := C_2 \big( (nh)^{\frac 32} \abs{\ln(h)}^3 + nh^2 \big).
\]
If $n \in \N^*$ is not greater than $\etah / (4v_+h)$ we have
\[
\frac {\rho(n,h)}{\etah} \leq \frac {C_2}{(4v_+)^{\frac 32}} \etah^{\frac 12} \abs{\ln(h)}^3 + \frac {h}{4v_+} \limt h 0 0.
\]
Hence, if $h_0$ is small enough, then for $h \in (0,h_0]$, $\th \in \Th$ and $n \leq \etah / (4v_+h)$ we have 
\[
\innp{\LL_{h,\theta} f}{f} \leq (2n-1)v_\th h + \rho(n,h) \leq \etah. 
\]
By the min-max Theorem this implies $\l_n(h,\theta) \leq \etah$, and \eqref{estim-N} is proved.

On the other hand for $n \leq N(\etah,h,\theta)$ we have 
\begin{equation} \label{eq-lamda-n-leq}
\l_n(h,\th) - (2n-1)hv_\th \leq \l_n(h,\th) \e_2(h),
\end{equation}
where, by \eqref{eq-nh}, 
\[
\e_2(h) := \sup_{n \leq N(\etah,h,\theta)} \frac {\rho(n,h)}{\l_2(n,h)} \leq C_2 C^{\frac 32} \etah^{\frac 12} \abs{\ln(h)}^3 + C_2 C h \limt h 0 0.
\]
Then  \eqref{eq-harmonic-approx} follows from \eqref{eq-lamda-n-geq} and \eqref{eq-lamda-n-leq}. 

\end{proof}

\section{Absence of embedded eigenvalues with transverse confinement}

In this section, we prove Theorem \ref{th-semi-confined}. 

Since $\phi_+=+\infty$, we observe that if $\phi_- = +\infty$, we can apply Theorem \ref{th-conf-semibounded}. Thus, we can restrict our attention to the cases $\phi_- \in \R$ and $\phi_- = -\infty$. The proof relies on the following asymptotics for the eigenvalues:

\begin{proposition}\label{prop.nsb}
Assume that \eqref{hyp-a-asymp} holds (for any $\a > -1$)  and that $\phi_- \in [-\infty, +\infty)$. 
Let $n \in \N^*$. Then for $\xi$ large enough the operator $\LL_\xi$ has at least $n$ eigenvalues and its $n$-th eigenvalue $\l_n(\xi)$ satisfies 
\[\lambda_{n}(\xi)\underset{\xi\to+\infty}{=}(2n-1)c_{1}c_{0}^{-\frac{\alpha}{1+\alpha}}\xi^{\frac{\alpha}{1+\alpha}}+o(\xi^{\frac{\alpha}{1+\alpha}})\,,\]
where $c_0 = c_1 / (1+\alpha)$.
\end{proposition}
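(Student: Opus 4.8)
The plan is to show that for large $\xi$ the potential $x\mapsto(\xi-a(x))^2$ has an approximately harmonic well centered at the point $x_\xi$ determined by $a(x_\xi)=\xi$, to reduce the computation of $\lambda_n(\xi)$ to the parameter-dependent harmonic approximation of Theorem~\ref{th-hamonic-approx} (with index set $\Th=[\xi_0,+\infty)$), and to read off the asymptotics. There is no intrinsic semiclassical parameter in $\LL_\xi$ near the well, so one first has to manufacture one.

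\textbf{Geometry and rescaling.} Since $b\in\mathscr C^1$ and $b(x)\sim c_1x^\alpha$ at $+\infty$, there is $x_1$ with $b>0$ on $[x_1,+\infty)$; thus $a$ is strictly increasing there and $a(x)\sim c_0x^{1+\alpha}\to+\infty$ (note $1+\alpha>0$, so $\phi_+=+\infty$). Hence for $\xi$ large there is a unique $x_\xi>x_1$ with $a(x_\xi)=\xi$, with $x_\xi\to+\infty$, $x_\xi\sim(\xi/c_0)^{1/(1+\alpha)}$ and $b(x_\xi)\sim c_1c_0^{-\alpha/(1+\alpha)}\xi^{\alpha/(1+\alpha)}$; moreover $a$ is bounded above on $(-\infty,x_1]$ by a fixed constant (as $\phi_-<+\infty$), so for $\xi$ large $(\xi-a)^2$ is bounded below away from $0$ outside any fixed neighbourhood of $x_\xi$. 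I would then fix a family $h_\xi\to0$ decaying faster than every power of $|\ln\xi|^{-1}$, say $h_\xi=|\ln\xi|^{-5}$, set $\mu_\xi=(b(x_\xi)h_\xi)^{-1/2}$, $\kappa_\xi=b(x_\xi)/h_\xi$, so that $\kappa_\xi\mu_\xi^2=h_\xi^{-2}$ and $\kappa_\xi h_\xi=b(x_\xi)$, and conjugate $\LL_\xi$ by the unitary dilation–translation $u\mapsto\sqrt{\mu_\xi}\,u(x_\xi+\mu_\xi\,\cdot)$; dividing by $\kappa_\xi$ yields
\[
\kappa_\xi^{-1}\LL_\xi\ \cong\ h_\xi^2D_s^2+V_\xi(s),\qquad
V_\xi(s)=\kappa_\xi^{-1}\Big(\int_0^{\mu_\xi s}b(x_\xi+u)\,\mathrm du\Big)^2 .
\]
By the choice of $\mu_\xi,\kappa_\xi$ the leading Taylor coefficient of $V_\xi$ at $0$ is $v_\xi\equiv1$; from $b(x_\xi)h_\xi x_\xi^2\to+\infty$ one gets $\mu_\xi=o(x_\xi)$, and then $|b'(x)|\le C\pppg x^{\alpha-1}$ shows that on every fixed window $[-R,R]$ one has $|V_\xi(s)-s^2|\le\delta_\xi|s|^3$ with $\delta_\xi\to0$, while $V_\xi(s)\ge c_\infty$ for $|s|\ge\e_0$ (monotonicity of $a$ near $x_\xi$, together with $a<\xi$ on $(-\infty,x_\xi)$). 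Since $\phi_-$ may be $-\infty$ and $b$ is not controlled at $-\infty$, $V_\xi$ itself need not be polynomially bounded; I would therefore apply Theorem~\ref{th-hamonic-approx} to the \emph{truncated} family $\widetilde V_\xi:=\min(V_\xi,3R^2)$, which for $\xi$ large agrees with $V_\xi$ on $[-R,R]$ (where $V_\xi\le2s^2$), is bounded, obeys $\widetilde V_\xi\le V_\xi$, and satisfies the structural hypotheses with, e.g., $v_-=\tfrac12$, $v_+=2$, $N=0$.

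\textbf{Squeezing.} Apply Theorem~\ref{th-hamonic-approx} to $\widetilde\LL_{h,\xi}:=h^2D_s^2+\widetilde V_\xi$ with the admissible choice $\eta_h=\sqrt h$ (so $\eta_h=o(|\ln h|^{-6})$ and $\eta_h/h\to+\infty$). Then $N(\eta_{h_\xi},h_\xi,\xi)\ge\eta_{h_\xi}/(4v_+h_\xi)-1\to+\infty$, so for the fixed $n$ and $\xi$ large the $n$-th eigenvalue of $\widetilde\LL_{h_\xi,\xi}$ is below $\eta_{h_\xi}$ and equals $(2n-1)h_\xi(1+o(1))$ (using $v_\xi\equiv1$). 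Because $\widetilde V_\xi\le V_\xi$, the min–max principle gives $\lambda_n(\LL_{h_\xi,\xi})\ge\lambda_n(\widetilde\LL_{h_\xi,\xi})=(2n-1)h_\xi(1+o(1))$; since $\inf\Sp_\ess(\LL_{h_\xi,\xi})\to+\infty$, this forces $\LL_{h_\xi,\xi}$ (hence $\LL_\xi$) to have at least $n$ eigenvalues below its essential spectrum. For the matching upper bound I would test $\LL_{h_\xi,\xi}$ on the cut-off rescaled Hermite states $\chi(s)\,h_\xi^{-1/4}f_k(s/\sqrt{h_\xi})$, $1\le k\le n$, with $\chi$ a fixed cut-off supported in $[-R,R]$, $\equiv1$ near $0$: these are compactly supported, almost orthonormal, concentrated on scale $\sqrt{h_\xi}\ll R$ (so the cut-off produces only exponentially small errors), and supported where $V_\xi=\widetilde V_\xi$ and $|V_\xi(s)-s^2|\le\delta_\xi|s|^3$, so — by the very computation that closes the proof of Theorem~\ref{th-hamonic-approx} (the bound on $\int|s|^3\pppg s^N|f|^2$), plus the exponentially small cut-off corrections — one obtains $\lambda_n(\LL_{h_\xi,\xi})\le(2n-1)h_\xi(1+o(1))$. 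Combining the two bounds, $\lambda_n(\LL_{h_\xi,\xi})=(2n-1)h_\xi(1+o(1))$, and undoing the scaling,
\[
\lambda_n(\xi)=\kappa_\xi\,\lambda_n(\LL_{h_\xi,\xi})=(2n-1)\kappa_\xi h_\xi(1+o(1))=(2n-1)b(x_\xi)(1+o(1)),
\]
which is the announced asymptotics by the estimate on $b(x_\xi)$ above.

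\textbf{Main obstacle.} The delicate point is engineering the reduction to Theorem~\ref{th-hamonic-approx}: one has to introduce the artificial semiclassical parameter $h_\xi$ and tune its rate of decay so that every constant in the hypotheses (the Taylor structure of $V_\xi$, the bound on $w_\xi$, the coercivity constant $c_\infty$) is uniform in $\xi$ and so that $\mu_\xi=o(x_\xi)$ still holds, while keeping $\eta_h=o(|\ln h|^{-6})$. The accompanying difficulty is the uncontrolled left tail of $V_\xi$ when $\phi_-=-\infty$; the asymmetric device — bounding $\LL_{h_\xi,\xi}$ from below by the truncated operator and from above by compactly supported test functions — is precisely what removes it, so that the only places where $V_\xi$ is actually used are the fixed window $[-R,R]$ and a soft monotonicity argument.
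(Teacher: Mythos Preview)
Your argument is correct, but it takes a noticeably more circuitous route than the paper's. The paper observes that there is a \emph{natural} semiclassical parameter: rescaling by $x=x_\xi(1+s)$ and dividing by $\xi^2$ gives
\[
\xi^{-2}U_\xi\LL_\xi U_\xi^{-1}=h_\xi^2D_s^2+V_\xi(s),\qquad h_\xi=(\xi x_\xi)^{-1},\quad V_\xi(s)=\bigl(1-\xi^{-1}a(x_\xi(1+s))\bigr)^2,
\]
with $v_\xi=\xi^{-1}x_\xi b(x_\xi)\to c_1/c_0$. The crucial point you seem to have missed is that the bound $\abs{b'(x)}\le C\pppg x^{\alpha-1}$ in \eqref{hyp-a-asymp} is \emph{global}; plugging it into the Taylor remainder yields a uniform polynomial estimate $\abs{w_\xi(s)}\le\tilde C\pppg s^{\max(2\alpha-1,\alpha-1)}$ valid for all $s\in\R$ and all large $\xi$. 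So $V_\xi$ already satisfies the hypotheses of Theorem~\ref{th-hamonic-approx} directly, and one reads off $\tilde\l_n(\xi)\sim(2n-1)h_\xi v_\xi$, hence $\l_n(\xi)=\xi^2\tilde\l_n(\xi)\sim(2n-1)b(x_\xi)$.

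By contrast, you manufacture an \emph{artificial} parameter $h_\xi=\abs{\ln\xi}^{-5}$, rescale by a different $\mu_\xi$, then---believing ``$b$ is not controlled at $-\infty$''---truncate the potential and run an asymmetric argument (Theorem~\ref{th-hamonic-approx} from below, cut-off Hermite quasimodes from above). This works, and indeed it would survive weakening \eqref{hyp-a-asymp} to a one-sided bound on $b'$; but under the actual hypotheses the truncation, the asymmetric squeezing, and the free choice of $h_\xi$ are all unnecessary. The paper's rescaling is the canonical one: it makes the scale of the well exactly $1$ and lets the global control on $b'$ do the rest.
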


\begin{proof}
There exists $x_0 \geq 1$ such that for $x \geq x_0$ we have 
\begin{equation} \label{eq-minor-der-a}
\quad a'(x) = b(x) \geq \frac {c_1  x^{\a}}2.
\end{equation}
In particular, $a$ is increasing on $[x_0,+\infty)$. Since $a$ has a limit in $[-\infty,+\infty)$ at $-\infty$ we can assume, by choosing $x_0$ larger if necessary, that $a(x_0) > a(x)$ for all $x \in (-\infty,x_0)$. We set $\x_0 = a(2x_0)$. Then for $\x \geq \x_0$ there is a unique $x_\xi \in \R$ such that $a(x_\xi) = \xi$. Since
\[
a(x)=\int_0^{x}b(u)\,\mathrm{d}u\underset{x\to+\infty}{\sim} c_{0} x^{\alpha+1}\,,
\]
it satisfies
\[
x_{\xi}\underset{\xi\to+\infty}{\sim}(c_{0}^{-1}\xi)^{\frac{1}{1+\alpha}}\,.
\]
Let $\x \geq \x_0$. For $v \in L^2(\R)$ and $s \in \R$, we set 
\[
(U_\x v)(s) =  x_\x^{\frac 12}  v \big( x_\x (1+s) \big)\,. 
\]
$U_\x$ is a unitary operator on $L^2(\R)$ and 
\begin{equation} \label{eq-U-xi}
U_\xi \LL_\x U_\xi\inv  = x_\x^{-2} D_s^2  +  \big( \x - a(x_\x (1+s) \big)^2 =\xi^2\left[h_\xi^2 D_s^2 + V_\xi(s)\right]\,,
\end{equation}
where 
\[
h_\xi = (\xi x_\xi) \inv \quad \text{and} \quad V_\xi(s) = \big(1 -\xi^{-1} a(x_\x(1+s)\big)^2\,.
\]
$V_\xi$ takes non-negative values and has a unique zero at $s = 0$. By the Taylor formula, 
\begin{equation} \label{eq-taylor-a}
a \big( x_\x (1+s) \big) = \x + s x_\x a'(x_\x) +  s^2 x_\x^2 \int_0^1 (1-\tau) a'' \big( (1+\tau s) x_\x \big) \mathrm{d}\tau\,,
\end{equation}
so we can write 
\[
V_\xi(s) = s^2 v_\xi^2 + s^3 w_\xi(s)\,,
\]
where, by using \eqref{hyp-a-asymp},
\[v_\xi = \xi^{-1}x_\xi a'(x_\xi)\underset{\xi\to+\infty}{\to}\frac{c_{1}}{c_{0}} \quad \text{and} \quad 
\abs{w_\x(s)} \leq \tilde C \langle s\rangle^{ \max(2\alpha-1,\alpha-1)}\,,
\]
for some $\tilde C > 0$ independent of $\xi$ and $s$.

Let us now consider the coercivity property away from the minimum. Let $\e \in \big( 0,\frac 12 \big)$. Let $\x \geq \x_0$. For $s \geq \e$ we have by the Mean Value Theorem and \eqref{eq-minor-der-a}
\[
a(x_\xi(1+s)) - a(x_\xi) \geq \frac {c_1 s x_\xi^{\a +1}} 2  \geq c_\e \x, 
\]
for some $c_\e > 0$, and hence 
\[
V_\x(s) \geq c_\e^2.
\]

Similarly, if $s \leq -\e$ we have 
\[
a(x_\xi) - a(x_\xi(1+s)) \geq a(x_\xi) - a(x_\xi(1-\e))  \geq \frac {c_1 \e x_\x} 2 \left( \frac {x_\xi} 2 \right)^{\a},
\]
and we conclude similarly. In any case we obtain $c_\infty > 0$ such that for $\xi\geq \xi_0$ and $\abs s \geq \e$ we have 
\[V_{\xi}(s)\geq c_\infty\,.\]
With all these properties we can apply Theorem \ref{th-hamonic-approx}. We obtain that, for all $n \in \N^*$, there exists $\x_0 \geq 0$ such that for $\x \geq \x_0$ the operator $h_\xi^2 D_s^2 + V_\xi(s)$ has at least $n$ eigenvalues and its $n$-th eigenvalue $\tilde \l_n (\x)$ satisfies
\[
\tilde \l_n(\x)  \underset{\xi\to+\infty}{\sim} (2n-1) h_\x v_\x \underset{\xi\to+\infty}{\sim}(2n-1)c_{1}c_{0}^{-\frac{\alpha}{1+\alpha}}\xi^{-\frac{1}{1+\alpha}-1}\,.
\]
The asymptotic behavior of $\l_n(\x)$ follows since, by \eqref{eq-U-xi}, we have $\l_n(\xi) = \xi^2 \tilde \l_n(\xi)$.
\end{proof}

Now we can prove Theorem \ref{th-semi-confined}.

\begin{proof}[Proof of Theorem \ref{th-semi-confined}]
Let $\l \geq 0$ and assume by contradiction that $\l$ is an eigenvalue of $\LL$.

Consider the case $\phi_- = -\infty$. Then, for all $\xi \in \R$ the spectrum of $\LL_\xi$ is purely discrete. By Proposition \ref{prop-Sigma-lambda}, there exists $n \in \N^*$ such that $\l_n(\x) = \l$ for all $\xi \in \R$. This gives a contradiction with Proposition \ref{prop.nsb}.

Consider now the case $\phi_- \in \R$. Then, we have $\Sigma_\l = \R \setminus [\phi_- - \sqrt \l , \phi_- + \sqrt \l]$ and we consider its two connected components in order to apply Proposition \ref{prop-Sigma-lambda}.
\begin{itemize}

\item By Proposition \ref{prop.nsb}, $\l$ cannot be an eigenvalue of $\LL_\xi$ for all $\xi > \phi_- + \sqrt \l$. 

\item Since $a$ is bounded from below, Proposition \ref{prop-LLxi} gives $\displaystyle{\lim_{\xi\to-\infty}\inf\s(\mathscr{L}_\xi)=+\infty}$, so $\l$ cannot be an eigenvalue of $\LL_\xi$ for all $\xi<\phi_--\sqrt{\lambda}$.
\end{itemize}
This is a contradiction.
\end{proof}

\begin{remark}
These arguments also imply Theorem \ref{th-conf-semibounded} \eqref{eq.thm1ii}. Since $b_\pm \neq 0$, we are in a situation where $\Sp_\ess(\LL_\xi)$ is empty for all $\xi \in \R$, so if $\LL$ has an eigenvalue there exists $n \in \N$ such that $\l_n(\xi)$ does not depend on $\xi$. This gives a contradiction since, by Proposition \ref{prop.nsb}, we should have 
\[
\lim_{\xi \to \pm \infty} \l_n(\xi) = (2n-1) b_\pm\,.
\]
\end{remark}

\section{Moderately small eigenvalues without transverse confinement}
In this section we prove the second statement of Theorem \ref{th-a-bounded}. We recall that $b_1$, $a_1$ and the operators $\LL_h$, $h > 0$, were defined before the statement of Theorem \ref{th-a-bounded}.

For $\theta \in \R$, we let
\[
\LL_{h,\theta} = h^2D_x^2 +(\theta - a_1(x))^2.
\]
Then $\LL_h$ is the direct integral of $\LL_{h,\theta}$, $\th \in \R$, as in \eqref{eq-direct-integral}.

\begin{proof} [Proof of Theorem \ref{th-a-bounded}.\eqref{th-a-bounded-item-ii}]

Since $b_1$ takes positive values, $a_1$ is an increasing bijection from $\R$ to $(0,1)$. For $\th \in (0,1)$ we set $x_\theta = a_1\inv(\theta)$. Then for $s \in \R$ we set $V_\theta(s) = (\theta-a_1(x_\th+s))^2$. This defines a nonnegative valued potential, $0$ is the unique solution of $V_\theta(0) = 0$ and $V''_\theta(0) = 2 b_1(x_\theta)^2 > 0$, so $V_\theta$ has a unique non-degenerate minimum at $0$ (and this minimum is not attained at infinity).

Let $J$ be a compact interval of $\R$ on which $b$ is not constant and $\Th = a(J)$. As in \eqref{eq-taylor-a} we write 
\[
a_1(x_\th + s) = \th + s b_1(x_\th) + s^2 I(\th,s),
\]
where
\[
I(\theta,s) = \int_0^1 (1-\tau) b_1'(x_\th + s\tau) \, \mathrm d \tau.
\]
 This gives 
\[
V_\th(s) = s^2 b_1(x_\th)^2 + s^3 \big( 2b_1(x_\theta) I(\theta,s) + s I(\theta,s)^2 \big).
\]
Since $b_1$ is continuous and takes postive values, there exist $v_-,v_+ > 0$ such that $v_- < b_1(x_\theta) < v_+$ for all $\th \in \Th$. On the other hand, since $b_1'$ grows at most polynomially, this is also the case for $I(\theta,\cdot)$, uniformly in $\theta \in \Theta$. Thus, we can apply Theorem \ref{th-hamonic-approx}. By \eqref{eq-harmonic-approx-2} there exist $h_0 > 0$ and $\e : \R_+^* \to \R_+^*$ going to 0 at 0 such that for $\theta \in \Th$, $h\in(0,h_0)$ and $n \leq N(\etah,h,\th)$ we have 
\begin{equation} \label{estim-lk-vf}
\abs{\l_n(h,\theta) - (2n-1) h b_1(x_\theta)} \leq  \e(h) (2n-1) h.
\end{equation}

Let $x_1,x_2 \in J$ be such that $b_1(x_1) \neq b_1(x_2)$. We set $\th_1 = a_1(x_1)$, $\th_2 = a_1(x_2)$. Choosing $h_0$ smaller if necessary, we can assume that for all $h\in(0,h_0)$ we have 
\begin{equation} \label{eq-diff-b}
\abs{b_1(x_1) - b_1(x_{2})} > \e(h).
\end{equation}

Now assume by contradiction that there exist $h\in(0,h_0)$ and $\l \in [0,\etah]$ such that $\l$ is an eigenvalue of $\LL_h$. We necessarily have $\l \in \big[0,\frac {1}{4} \big)$. Then, with $\Sigma_\l$ defined as in \eqref{def-Sigma-lambda}, we have 
\[
\Sigma_\l = \big( -\infty, -\sqrt \l \big) \cup \big( \sqrt \l , 1 - \sqrt \l \big) \cup \big( 1+\sqrt \l,+\infty \big).
\]
As in the proof of the first statement of Theorem \ref{th-a-bounded} we see that $\l$ cannot be an eigenvalue of $\LL_{h,\theta}$ for all $\theta \in (-\infty, -\sqrt \l)$ or for all $\th \in (1 + \sqrt \l, +\infty)$, so by Proposition \ref{prop-Sigma-lambda} there exists $k \in \N^*$ such that $\l = \l_k(h,\theta)$ for all $\theta \in \big(\sqrt{\l} , 1 - \sqrt{\l} \big)$. If $h_0$ was chosen small enough, we have $\th_1,\th_2 \in  \big(\sqrt{\l} , 1 - \sqrt{\l} \big)$, so $\l_k(h,\theta_1) = \l = \l_k(h,\theta_2)$, which gives a contradiction with \eqref{estim-lk-vf} and \eqref{eq-diff-b}.
\end{proof}

\begin{remark}
Note that \eqref{estim-lk-vf} describes the dispersion curves on the interval $(\sqrt{\lambda}, 1-\sqrt{\lambda})$, see Figure \ref{fig.0}. The eigenvalues under consideration here are far below the \enquote{peak} of the essential spectrum. 
\end{remark}

\begin{remark}\label{rem.center}
The function $\Theta\ni\theta\mapsto b_1(x_\theta)$ is nothing but an effective Hamiltonian which emerges from the semiclassical limit. In the semiclassical spectral theory of the magnetic Laplacian, this effective Hamiltonian appears, for instance, in \cite[Theorem 1.1]{RVN15}. With this interpretation, the function $\theta\mapsto x_\theta$ corresponds to a parametrization of the \enquote{characteristic manifold} of the magnetic Laplacian.
\end{remark}

\subsection*{Acknowledgments}
This work has been supported by the CIMI Labex, Toulouse, France, under grant ANR-11-LABX-0040-CIMI. N. Raymond is deeply grateful to the Mittag-Leffler Institute where part of this work was completed.

\appendix 

\section{}
The following lemma is very classical and originally appears in \cite{jost}.
\begin{lemma}\label{lem.scat}
Let $\o \geq 0$ and $w \in L^1(\R_{+})$. Let $\psi \in C^2(\R_+)$ be such that 
\begin{equation*}%\label{eq.scat}
-\psi''- \o^2 \psi + w \psi=0\,.
\end{equation*}
There exists a unique $(a,b)\in\C^2$ such that 
\begin{eqnarray*}
\psi(x) & \underset{x\to+\infty}{=} & ae^{i\omega x}+be^{-i\omega x}+o(1),\quad \o > 0,\\
\psi(x) & \underset{x\to+\infty}{=} &  a + bx + o(1),\quad \o = 0.
\end{eqnarray*}
In particular, if $\psi\in L^2(\R_{+})$, then $\psi=0$.
\end{lemma}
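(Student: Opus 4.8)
The plan is to recast the second order equation as a Volterra integral equation by the variation of constants (Duhamel) formula based at a fixed, large point $x_0 \geq 0$, to obtain an a priori bound on $\psi$ by a Gr\"onwall argument, and then to read off the constants $a$ and $b$ from the ``accumulated'' integrals, which converge because $w\psi$ is dominated by the integrable function $\abs w$.

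First I would treat the case $\o > 0$. Since $\psi \in C^2(\R_+)$ solves the equation, $w\psi = \psi'' + \o^2\psi$ is continuous, so $\psi_p(x) = \o\inv \int_{x_0}^x \sin(\o(x-t)) w(t)\psi(t)\,\mathrm{d}t$ is well defined, of class $C^2$, and satisfies $-\psi_p'' - \o^2\psi_p = -w\psi$; hence $\psi - \psi_p$ solves the homogeneous equation, so there are $\alpha,\beta \in \C$ with
\[
\psi(x) = \alpha e^{i\o x} + \beta e^{-i\o x} + \frac 1\o \int_{x_0}^x \sin(\o(x-t)) w(t)\psi(t)\,\mathrm{d}t .
\]
Taking moduli and using $w \in L^1(\R_+)$ together with $\abs{\sin}\leq 1$, Gr\"onwall's inequality gives $M := \sup_{x \geq x_0} \abs{\psi(x)} < \infty$. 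Writing $\sin(\o(x-t)) = (e^{i\o(x-t)} - e^{-i\o(x-t)})/(2i)$ and regrouping the $e^{\pm i\o x}$ factors, the coefficient of $e^{i\o x}$ becomes $\alpha + (2i\o)\inv\int_{x_0}^x e^{-i\o t} w\psi\,\mathrm{d}t$, and symmetrically for $e^{-i\o x}$; since $\abs{e^{\mp i\o t}w(t)\psi(t)} \leq M\abs{w(t)} \in L^1(\R_+)$, these integrals converge as $x \to +\infty$, which defines $a$ and $b$, and the remainder $\psi(x) - a e^{i\o x} - b e^{-i\o x}$ is bounded by $M\o\inv \int_x^{+\infty}\abs w \to 0$.

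For the uniqueness of $(a,b)$ and for the final assertion I would use that $e^{i\o x}$ and $e^{-i\o x}$ are ``asymptotically independent'': if $c_1 e^{i\o x} + c_2 e^{-i\o x} = o(1)$, averaging over a period $[X, X+\pi/\o]$ kills the oscillating part and forces $c_1 = c_2 = 0$. In particular, if $\psi \in L^2(\R_+)$ then averaging $\abs\psi^2$ over periods forces $a = b = 0$, hence $\psi(x) = o(1)$ and $\abs{\psi(x)} \leq \o\inv \int_x^{+\infty}\abs w\,\abs\psi$; choosing $x_1$ so large that $\o\inv\int_{x_1}^{+\infty}\abs w < 1$ yields $\psi \equiv 0$ on $[x_1,+\infty)$, and uniqueness for the associated first order Cauchy problem (whose right-hand side is Lipschitz in the unknown with an $L^1_{\loc}$ constant) then propagates this down to $\psi \equiv 0$ on $\R_+$. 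The case $\o = 0$ is handled along the same lines, with the kernel $\o\inv \sin(\o(x-t))$ replaced by $x-t$ and the homogeneous solutions $e^{\pm i\o x}$ replaced by $1$ and $x$; one first checks that $\psi$ grows at most linearly, then isolates its constant and linear parts to define $a$ and $b$, and concludes similarly.

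The hard part will be the a priori control of $\psi$. For $\o > 0$ it falls straight out of Gr\"onwall because the Duhamel kernel is bounded; at the threshold $\o = 0$, however, the kernel $x-t$ is unbounded, so establishing the linear growth bound on $\psi$ — and with it the convergence of the integrals producing $a$ and $b$ — is the delicate step. Once this bound is secured, everything downstream, including the implication $\psi \in L^2(\R_+) \implies \psi = 0$, is routine.
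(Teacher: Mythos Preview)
Your argument is the paper's, phrased in scalar rather than matrix form: where the paper passes to $U=(\psi,\psi')^{\intercal}$, diagonalises, and runs Duhamel--Gr\"onwall on the first-order system, you apply variation of constants directly with the kernel $\omega^{-1}\sin(\omega(x-t))$. The two are equivalent, and for $\omega>0$ your proof is correct and in fact more explicit than the paper's on the implication $\psi\in L^2\Rightarrow\psi\equiv 0$.

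The genuine gap is at $\omega=0$, and it is worse than you indicate. Even granting a linear bound $|\psi(x)|\lesssim 1+x$, the integrals you need in order to define $a$ and $b$ are $\int^{\infty} w\psi$ and $\int^{\infty} t\,w\psi$, and with only $w\in L^1$ and linear growth of $\psi$ these need not converge; so ``once linear growth is secured, everything downstream is routine'' is not true. In fact the statement itself fails at $\omega=0$ under the sole hypothesis $w\in L^1(\R_+)$: take $w$ continuous with $w(x)=x^{-2}$ for $x\geq 1$, so that $w\in L^1(\R_+)$; on $[1,\infty)$ the equation $\psi''=x^{-2}\psi$ is Euler's, with independent solutions $x^{(1\pm\sqrt5)/2}$. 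The growing branch $x^{(1+\sqrt5)/2}$ is not of the form $a+bx+o(1)$, and the decaying branch $x^{(1-\sqrt5)/2}$ lies in $L^2([1,\infty))$ since $1-\sqrt5<-1$, so after extending across $[0,1]$ one obtains a nonzero $L^2(\R_+)$ solution. The paper's ``proceed similarly'' glosses over the same difficulty; the classical remedy is the stronger moment condition $\int_0^{\infty}(1+x)|w(x)|\,\mathrm dx<\infty$, under which your outline (and the paper's) goes through.
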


\begin{proof}
We first assume that $\o > 0$. For $x \geq 0$ we set $U(x) = (\p(x) , \p'(x)) \trsp$. Then $U \in C^1(\R_+)$ and 
\[U'= \begin{pmatrix} 0 & 1 \\ -\o^2 & 0 \end{pmatrix}U+\begin{pmatrix} 0 & 0 \\ w  & 0 \end{pmatrix}U\]
We have
\[
P\inv \begin{pmatrix} 0 & 1 \\ -\o^2 & 0 \end{pmatrix} P = i\Omega\,,\quad  \Omega = \begin{pmatrix} \omega & 0 \\ 0 & -\omega \end{pmatrix}, \quad 
P = \begin{pmatrix} 1 & 1 \\ i\o & -i\o \end{pmatrix}\,.
\]
Then we have 
\[
V'(x) = i\O V(x) + M(x) V(x)
\]
where 
\[
V = P\inv U, \quad \text{and} \quad M = P\inv \begin{pmatrix} 0 & 0 \\ w & 0 \end{pmatrix} P\in L^1(\R_{+})\,.
\]
The Duhamel Formula gives, for all $x\geq 0$,
\begin{equation}\label{eq-Duhamel}
V(x) = e^{i\Omega x}V(0) + \int_0^x e^{i\O (x-s)} M(s) V(s) \mathrm{d}s\,.
\end{equation}
In particular,
\[
\nr{V(x)} \leq\nr{V(0)} + \int_0^x \nr{M(s)} \nr{V(s)} \mathrm{d}s\,,
\]
and hence, by the Gronwall Lemma,
\[
\|V(x)\| \leq \nr{V(0)} e^{\int_0^x \|M(s)\| \mathrm{d}s}\,.
\]
This proves that $V$ is bounded. Thus, by \eqref{eq-Duhamel} we can set 
\[
A = \lim_{x \to +\infty} e^{-i\Omega x} V(x).
\]
The Duhamel Formula now gives 
\begin{equation} \label{eq-Duhamel-2}
V(x)=e^{i\Omega x}A-\int_{x}^{+\infty} e^{i\O (x-s)} M(s) V(s)\mathrm{d}s \underset{x\to+\infty}{=} e^{i\Omega x}A + o(1).
\end{equation}
It remains to multiply by $P$ to conclude. If $\o = 0$ then we proceed similarly, without change of basis, and using the fact that
\[
\exp \begin{pmatrix} 0 & x \\ 0 & 0 \end{pmatrix}  = \begin{pmatrix} 1 & x \\ 0 & 1 \end{pmatrix}. 
\]
This establishes the existence of $a$ and $b$. Since they are necessarily unique, the proof is complete.

\end{proof}

\end{document}